\documentclass[final,twocolumn]{IEEEtran}
\usepackage{float}
\usepackage{amssymb,amsmath,amsthm,bm}
\usepackage{mathtools} 
\usepackage{datetime}

\makeatletter
\newcommand{\tpmod}[1]{{\@displayfalse\pmod{#1}}}
\newcommand{\ord}{\mathop{\mathrm{ord}}}
\makeatother

\usepackage{graphicx,color}
\graphicspath{{figures-pdf/}}
\usepackage{cite}
\usepackage{url}

\usepackage{diagbox}
\usepackage[aboveskip=1pt]{subcaption}
\usepackage{multirow,bigstrut}
\usepackage{tabularx}
\usepackage{arydshln}

\usepackage{tocloft}                    
\usepackage[mathlines]{lineno}

\newlength\OneImW
\newlength\TwoImW
\newlength\BigOneImW
\newlength\twofigwidth
\newlength\sfigwidth
\newlength\vfigskip
\newtheorem{theorem}{Theorem}
\newtheorem{proposition}{Proposition}
\newtheorem{corollary}{Corollary}
\newtheorem{lemma}{Lemma}

\newcommand{\Zy}[1]{\mathbb{Z}_{3^{#1}}}
\newcommand{\Zt}[1]{\mathbb{Z}_{2^{#1}}}

\newcommand{\bnu}{\boldsymbol{\nu}}
\newcommand{\vvert}[0]{{\;\vert\;}}

\usepackage{hyperref}

\renewcommand\baselinestretch{1}

\begin{document}


\title{Graph Structure of Chebyshev Permutation Polynomials over Binary and Ternary Adic Rings}

\author{Xiaoxiong Lu, Yuling Dai, Chengqing Li
\thanks{This work was supported by the National Natural Science Foundation of China (no.~62571470).}

\thanks{X. Lu is with the College of Mathematics and Statistics, Hengyang Normal University, Hengyang, 421008, China.}

\thanks{Y. Dai and C. Li are with the School of Computer Science, Xiangtan University, Xiangtan 411105, Hunan, China.}
}
\markboth{IEEE Transactions}{Li\MakeLowercase{\textit{et al.}}}

    \IEEEpubid{\begin{minipage}{\textwidth}\ \\[12pt] \centering
    \\[2\baselineskip]
    0018-9448 \copyright 2025 IEEE. Personal use is permitted, but republication/redistribution requires IEEE permission.\\
      See http://www.ieee.org/publications\_standards/publications/rights/index.html for more information.\\ \today{}\space(\currenttime)
    \end{minipage}
}
   
\maketitle

\begin{abstract}
Understanding the functional graph of a nonlinear map over a finite domain is crucial for analyzing its dynamical complexity and potential applications in cryptography and pseudorandom generation. In this paper, we investigate the graph structure of Chebyshev permutation polynomials over the ring $\mathbb{Z}_{2^{k_1}3^{k_2}}$, where $k_1$ and $k_2$ are positive integers
and $0\in\{k_1, k_2\}$.
Each element of the ring is regarded as a vertex, and the mapping relation defined by the polynomial corresponds to a directed edge. Building on new properties of Chebyshev polynomials modulo powers of $2$ and $3$, we provide an explicit characterization of path lengths and cycle structures in the functional graph. We show that, despite the complexities introduced by the binary and ternary components, the graph exhibits strong regularities, including a constant number of cycles of a given length and predictable branching patterns as $k_1$ and $k_2$ increase. Our results extend previous studies over prime-power rings, offering insights into the emergence of complexity in digital nonlinear maps and supporting the security analysis of their cryptographic applications.
\end{abstract}

\begin{IEEEkeywords}
Chebyshev polynomial, graph structure,
pseudorandom number sequence, period distribution.
\end{IEEEkeywords}


\section{Introduction}

\IEEEPARstart{D}igital computing environments are inherently constrained by discrete state spaces and finite word lengths. 
When nonlinear systems defined on continuous domains, typically represented by the Logistic map, are implemented on digital hardware, their original complex dynamical characteristics inevitably degrade due to finite-precision effects in floating-point or fixed-point arithmetic.
This degradation manifests as periodic outputs and statistical deviations from ideal chaotic behavior, ultimately compromising security and the quality of randomness \cite{cqli:network:TCASI2019}. 
To mitigate dynamical degradation, prior studies employ external interventions to sustain system complexity. Typical strategies include paralleling multiple nonlinear systems~\cite{TXJ:imageEncryption:2014}, injecting random perturbations into generated sequences \cite{wang:perturbation:TCAS2016}, and dynamically controlling parameters~\cite{Tan:random:ND2024}.

However, compared to these remedial strategies, a more fundamental approach is to forgo approximating continuous systems and instead reconstruct nonlinear maps over discrete algebraic structures, such as finite fields or finite rings~\cite{chenf:cat2:TIT13,Liao:ITC:2010,licq:Logistic:IJBC2023}.
Discrete nonlinear systems inherently preclude the dynamical degradation induced by finite-precision computation due to their finite state spaces. Furthermore, these systems facilitate exact graph-theoretic and algebraic analyses of key dynamical features, including global period distributions and local evolutionary behaviors. Notably, the residue class ring $\mathbb{Z}_m$ is predominantly employed for such analyses owing to its structural naturalness.

Notably, when the modulus is chosen as $m=2^k$, fixed-point integers are naturally represented as $k$-bit elements in the ring $\mathbb{Z}_{2^k}$. This algebraic structure directly corresponds to the arithmetic logic of $k$-bit registers in modern digital computers, where addition and multiplication operations naturally satisfy the closure property modulo $2^k$. Compared to executing modular arithmetic over the finite field $\mathbb{Z}_p$, nonlinear iterations over $\mathbb{Z}_{2^k}$ leverage the hardware-level natural overflow mechanism to achieve extremely high computational efficiency~\cite{Yoshioka:isit:2023}. 
Therefore, discrete nonlinear systems constructed over the ring $\mathbb{Z}_{2^k}$, such as various linear and nonlinear congruential generators, constitute the cornerstone of pseudorandom number generation in modern operating systems and standard libraries.

\IEEEpubidadjcol 

Beyond the mainstream binary architecture, the scenario in which the modulus is chosen as $m=3^k$ also has profound scientific significance. Examining from the fundamental dimension of information theory, the hardware complexity required to store an integer $N$ is proportional to $r \cdot \log_r N$, where $r$ denotes the radix. The extremum of this function is located at $e \approx 2.718$, and among all integers, $3$ is the closest to $e$. This implies that, at the theoretical level, ternary logic exhibits higher storage density and a simpler interconnection topology than binary logic~\cite{knuth1997vol2}.
In this paper, we call finite binary and ternary adic rings as
$\mathbb{Z}_{2^k}$ and $\mathbb{Z}_{3^k}$, respectively.

Chebyshev polynomials can provide near-optimal polynomial approximations, forming the foundation for stable and rapidly convergent methods in function approximation, interpolation, and the numerical solution of linear systems.
The fundamental recurrence relation of Chebyshev polynomials can be expressed as
$T_n(x) = 2xT_{n-1}(x) - T_{n-2}(x).$
Their inherent semigroup property serves as a fundamental mechanism for their widespread applications, which is formulated as
$
T_n(T_m(x)) = T_{nm}(x).
$
With the ongoing evolution of nonlinear dynamics and digital cryptographic systems, Chebyshev polynomials occupy an irreplaceable central position, particularly in public-key encryption mechanisms and pseudorandom sequence design~\cite{kocarev:2003:public}.

Early relevant discussions predominantly focused on continuous Chebyshev polynomials over real or complex fields. Specifically, when $x \in [-1, 1]$, this polynomial can be explicitly defined through inverse trigonometric functions, and its analytical expression is equivalently represented as
$$
T_n(x) = \cos(n \arccos x).
$$
However, cryptanalytic techniques such as phase space reconstruction, chosen-plaintext attacks, and orbital parameter deciphering have compromised various continuous-domain Chebyshev encryption protocols~\cite{Bergamo:PKIchebyshev:CASI2005,Chen:chebyZN:IS2011}. These vulnerabilities highlight the inherent theoretical and security limitations of continuous chaotic systems in physical implementations.

To address the inherent finite-precision effects of continuous chaotic systems in engineering implementations, the academic community has gradually shifted its research focus toward discrete algebraic systems. 
Researchers have extensively explored Chebyshev polynomials defined over finite fields $\mathbb{F}_N$ and residue class rings $\mathbb{Z}_N$~\cite{cqli:Cheby:TIT25,Daniel:Chebyshev:2018,Liao:ITC:2010,Yoshioka:ChebyshevPk:TCAS2:2018}, where $N$ is an integer. Under these conditions, the Chebyshev polynomial is formulated as
\begin{equation}\label{chevby:express:fild}
T_n(x) = 2xT_{n-1}(x) - T_{n-2}(x) \bmod N.
\end{equation}
In these discrete algebraic structures, Chebyshev polynomials not only perfectly preserve the original semigroup property but also eliminate the precision loss issues caused by floating-point operations~\cite{Liao:ITC:2010}. 
Particularly over the residue class ring of prime powers $\mathbb{Z}_{p^k}$, Chebyshev permutation polynomials exhibit more complex algebraic structures and exceptionally superior cryptographic properties~\cite{Chen:chebyZN:IS2011}. 
When the degree $n$ of the polynomial and the characteristic of the residue ring satisfy specific coprime conditions, the Chebyshev polynomial constitutes a strict permutation polynomial. 
This indicates that the mapping $f(x) \equiv T_n(x) \pmod{p^k}$ over the residue class ring $\mathbb{Z}_{p^k}$ possesses a perfect global bijective property.

Despite mitigating dynamical degradation, discrete Chebyshev maps remain vulnerable~\cite{Yoshioka:ChebyshevPk:TCAS2:2020}. Improper choices of the polynomial degree $n$ or the initial state $x_0$ lead to extremely short periods or concentrated fixed points, exposing topological vulnerabilities. Consequently, functional graph analysis has emerged as a core methodology for screening long-period orbits and ensuring the security of discrete Chebyshev cryptosystems. Furthermore, extensive theoretical studies have quantitatively characterized the permutation rules, period distributions, and trajectory lengths of Chebyshev polynomials across various discrete algebraic structures.

As illustrated in Table~\ref{tab:CombinedPeriods}, Daniel et al. systematically provided the criteria for the period $T$ of Chebyshev polynomials over the finite field $\mathbb{F}_q$ and its corresponding effective number of orbits $N_T$~\cite{Daniel:Chebyshev:2018}. Breaking through the limitations of prime fields, Li et al. further derived more complex periodic structures and counting formulas over the prime power residue class ring $\mathbb{Z}_{p^k}$~\cite{cqli:Cheby:TIT25}. 
Meanwhile, for the prime residue ring $\mathbb{Z}_N$, Liao et al. detailed the mapping relationship between the reducibility of the characteristic polynomial $f(t)=t^{2}-2xt+1 \in \mathbb{Z}_{N}[t]$ and the corresponding period $T'$~\cite{Liao:ITC:2010}. 
To clarify the notations utilized in Table \ref{tab:CombinedPeriods}, the corresponding symbols are explicitly defined. The variables $T$ and $T'$ represent the minimum positive integers satisfying $T^{s}_n(x)=T^{s+T}_n(x)$ for any positive integer $s$ and element $x$ in the corresponding domain, and $T_n(x)=T_{n+T'}(x)$ for any positive integer $n$ and element $x$, respectively. The symbol $\varphi$ denotes Euler's totient function, $N$ indicates a positive prime, and $\tilde{o}_d(n)$ characterizes the order of $n$ within the quotient group $\mathbb{Z}^*_N/\{1, -1\}$. Furthermore, the auxiliary parameters are determined by the mathematical expressions $N_s = \min\{i\vvert T^{i}_n(x)\equiv x_0\pmod{p}\}$, $l_s = \ord \left( T'_{n^{N_s}}(x_0) \right)$, and $w = \bnu_p(n^{2 \cdot \ord(n^2)} - 1)$.

Recently, Panraksa and Tangboonduangjit \cite{panraksa:fixedChebysev:arxiv2026} investigated the non-permutation dynamical behavior of Chebyshev polynomials over $\mathbb{Z}_{p^k}$. 
However, this theoretical framework fails to encompass the cases of the rings $\mathbb{Z}_{2^k}$ and $\mathbb{Z}_{3^k}$. 
For $p=2$, although Yoshioka\cite{Yoshioka:TCSII:2016} explored the periodic properties of the sequences, their work did not provide explicit expressions for the periods nor a precise characterization of the points from a global topological perspective. Given the limitations of conventional methods in these low-order cases, this paper bridges this theoretical gap by treating $\mathbb{Z}_{2^k}$ and $\mathbb{Z}_{3^k}$ as independent algebraic structures and systematically analyzing them separately. 
Moreover, this paper derives the explicit expressions for the periods and the precise characterization of points over the ring $\mathbb{Z}_{2^k}$, and thoroughly reveals the complete functional graph structure and period distribution over the ring $\mathbb{Z}_{3^k}$.


\begin{table}[!htb]
\caption{Conclusions regarding the period distribution of sequences generated by Chebyshev polynomials in the literature \cite{Daniel:Chebyshev:2018}, \cite{cqli:Cheby:TIT25}, and \cite{Liao:ITC:2010}}
\centering
\resizebox{\linewidth}{!}{
\begin{tabular}{*{3}{c|}c}
    \hline
    Reference & $T$ or $T'$ & Condition & $N_T$ or $N_{T'}$ \\
    \hline    
    \multirow{3}{*}{\cite{Daniel:Chebyshev:2018}}
     & $\tilde{o}_d(n)$ & $d>2$ and $(d\mid(q-1)$ or $d\mid(q+1))$ & $\frac{\varphi(d)}{2\tilde{o}_d(n)}$\\ \cline{2-4}
     &
    \multirow{2}{*}{1}  & $q$ is an odd number & 1 \\ 
    \cline{3-4}
     &  & $q$ is an even number & 2 \\ \hline

     \multirow{4}{*}{\cite{cqli:Cheby:TIT25}}
     & $N_s$ & $k\geq1$ & $1$ \\\cline{2-4}
     &\multirow{2}{*}{$N_s\cdot l_s$}
        & $w+1\geq k\geq1$ & $\frac{p^{k-1}-1}{l_s}$ \\ \cline{3-4}
        & & $k> w+1$ & $\frac{p^w-1}{l_s}$ \\
    \cline{2-4}
    &$N_sl_sp^t$  & $k> w+1$ & $\frac{(p-1)p^{w-1}}{l_s}$ \\ \hline

    \multirow{4}{*}{\cite{Liao:ITC:2010}}
     & $1$ & $x=1$ & $1$ \\\cline{2-4}
     & $2$ & $x=N-1$ & $1$ \\ \cline{2-4}
    & $T' \in \{ d : d \mid N-1, d>2 \}$  & $x\neq\pm1$, $f(t)$ is reducible & \multirow{2}{*}{$\frac{\varphi(T')}{2}$} \\
    \cline{2-3}
    & $T' \in \{ d : d \mid N+1, d>2 \}$  & $x\neq\pm1$, $f(t)$ is irreducible \\
    \hline
\end{tabular}
}
\label{tab:CombinedPeriods}
\end{table}

\section{Preliminaries}

To make the analysis on the periodic change process of the Chebyshev polynomial sequence complete, the previously known results about the Chebyshev polynomial are briefly introduced in this section.

Chebyshev polynomials of degree $n$ are recursively defined by
\begin{equation}
    T_{n}(x)=2xT_{n-1}(x)-T_{n-2}(x),
\end{equation}
where $T_0(x)=1$, $T_1(x)=x$, $x\in [-1, 1]$ and $n$ is an integer lager than one.
The power series form of Chebyshev polynomials of degree $n$ can be represented as
\begin{equation}\label{eq:cheby:power}
    T_n(x)=\frac{n}{2} \sum_{i=0}^{\lfloor n/2 \rfloor} (-1)^i \frac{(n-i-1)!}{i! (n-2i)!} (2x)^{n-2i},
\end{equation}
where $n\ge 1$ \cite{bulychev:Chebyshev:ARC2003}.

Chebyshev polynomials exhibit a significant semigroup property. 
When these polynomials are evaluated over the residue ring $\mathbb{Z}_{p^k}$, this property is preserved, meaning that for any two positive integers $m$ and $n$, one has
\begin{equation*}
    T_n(T_m(x)) \equiv T_m(T_n(x)) \equiv T_{m \cdot n}(x) \pmod{p^k}.
\end{equation*}
Thus, the composition of $N$ Chebyshev polynomials $T_n(x)$ can be denoted $T_{n^{N}}(x)$ or $T^{N}_{n}(x)$.

A Chebyshev integer sequence $S_{p^k}(x_0; n)$ is generated by iterating the Chebyshev polynomial over $\mathbb{Z}_{p^{k}}$.
Namely, given an initial value $x$, the result of $i$-th iteration is
\begin{equation*}
T^i_n(x)=T_n(T^{i-1}_n(x))\bmod p^k,
\end{equation*}
where $i\geq 1$ and $T^0_n(x)=x\bmod p^k$. The least period $L(n, p, k, x)$ of the sequence $S_{p^k}(x_0; n)$ is the minimum possible positive integer of $N$ satisfying
$T^N_n(x)\equiv x \bmod p^k$. 
The Chebyshev polynomial $T_n(x)$ is a permutation polynomial modulo $p^k$ if and only if
\begin{equation}\label{PermuteCondition}
    \gcd(n, p)=\gcd(n, p^2-1)=1,
\end{equation}
where $\gcd$ denotes the greatest common divisor of two integers~\cite{Yoshioka:ChebyshevPk:TCAS2:2018}. 
For $p=2$, the necessary and sufficient condition for $T_n(x)$ to be a permutation polynomial simplifies to $n$ being an odd integer~\cite{Umeno:Permute}. 
Similarly, for $p=3$, the condition reduces to $n \equiv \pm1\pmod6$. 

\section{The Period Properties of Chebyshev Integer Sequence}

This section investigates the distribution of periods for sequences generated through the iteration of Chebyshev permutation polynomials over the rings $\mathbb{Z}_{p^k}$, where $p\in\{2, 3\}$.
Given that $n$ is an odd integer, it follows that $2 \mid (n \pm 1)$. 
Defining 
\begin{equation*}
    w = \max\{\bnu_p(n-1), \bnu_p(n+1)\}
\end{equation*}
for $p \in \{2, 3\}$, the inequality $w \geq 2$ holds for the specific case where $p=2$ and $n \geq 2$. 
Consequently, the $p$-adic valuation of $n^2-1$ satisfies $\bnu_2(n^2-1) \geq w+1$ for $p=2$, and the exact relation $\bnu_3(n^2-1) = w$ is established for $p=3$.

\subsection{The Period Properties of Chebyshev Integer Sequence over Ring $\mathbb{Z}_{2^k}$}

Lemma~\ref{pro:2w:Tn'1} and Lemma~\ref{pro:2w:ak} characterize the multiplicity of the factor $2$ within the derivatives at the endpoints and the coefficients of the polynomials. Building upon these results, Theorem~\ref{theorem:p2:explan} illustrates the evolution of sequence periods.
Congruence
\begin{equation}\label{eq:s2}
T_n(x_0) \equiv x_0 \pmod{2^2}
\end{equation}
can be verified as follows. When $x_0 \equiv 1 \pmod{4}$, the identity $T_n(1) = 1$ immediately implies that congruence~\eqref{eq:s2} holds. When $x_0 \equiv 2 \pmod{4}$, since $T_1(2) = 2$ and the recurrence relation $T_n(2) \equiv -T_{n-2}(2) \pmod{4}$, it follows that congruence~\eqref{eq:s2} holds in this case as well. When $x_0 \equiv 3 \pmod{4}$, the identity $T_n(-1) = (-1)^n = -1$ gives the same result, and congruence~\eqref{eq:s2} holds. For the case $x_0 \equiv 0 \pmod{4}$, the congruence can be confirmed through computational checks. Therefore, $s \geq 2$ in Theorem~\ref{theorem:p2:explan}.
Based on the identified periodic patterns, Theorem~\ref{theorem:p2:period:express} further establishes explicit expressions for the sequences.

\begin{lemma}\label{pro:2w:Tn'1}
As for any Chebyshev polynomial $T_n(x)$, one has $2^{w+1} \mid T'_n(\pm1)-1$
and 
\begin{equation}\label{eq:2wmid:T'n}
2^{w+\lfloor \frac{m}{2} \rfloor} \mid T^{(m)}_n(\pm1)
\end{equation}
for any $m\ge 2$, where $w =\max\{\bnu_2(n-1), \bnu_2(n+1)\}$.
\end{lemma}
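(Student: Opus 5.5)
The plan is to compute the derivatives of $T_n$ at $\pm1$ in closed form and then count factors of $2$ directly. Throughout, $n$ is odd, as required for $T_n$ to be a permutation polynomial modulo $2^k$.

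\emph{Step 1: a recursion for derivatives at $x=1$.} Start from the Chebyshev differential equation
\begin{equation*}
(1-x^2)y''-xy'+n^2y=0, \qquad y=T_n(x).
\end{equation*}
This is a polynomial identity, so it holds over $\mathbb{Z}[x]$, not only on $[-1,1]$. Differentiating it $m$ times with Leibniz's rule gives
\begin{equation*}
(1-x^2)y^{(m+2)}-(2m+1)x\,y^{(m+1)}+(n^2-m^2)y^{(m)}=0.
\end{equation*}
Setting $x=1$ yields $(2m+1)T_n^{(m+1)}(1)=(n^2-m^2)T_n^{(m)}(1)$. Starting from $T_n(1)=1$ and inducting on $m$, we obtain
\begin{equation*}
T_n^{(m)}(1)=\prod_{j=0}^{m-1}\frac{n^2-j^2}{2j+1}.
\end{equation*}

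\emph{Step 2: transfer to $x=-1$.} The parity $T_n(-x)=(-1)^nT_n(x)$ gives $T_n^{(m)}(-1)=(-1)^{n+m}T_n^{(m)}(1)$. Since we only need $2$-adic divisibility, this sign is harmless, except in the case $m=1$, where we need an exact value. For $m=1$ the sign is $(-1)^{n+1}=1$ because $n$ is odd, so $T'_n(\pm1)=n^2$.

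\emph{Step 3: the case $m=1$.} We have $n^2-1=(n-1)(n+1)$. Both factors are even, and one of them has $2$-adic valuation exactly $w$ while the other has valuation exactly $1$. Hence $\bnu_2(n^2-1)=w+1$, which gives $2^{w+1}\mid T'_n(\pm1)-1$.

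\emph{Step 4: the case $m\ge2$.} The denominator $\prod(2j+1)$ is odd, so
\begin{equation*}
\bnu_2\bigl(T_n^{(m)}(\pm1)\bigr)=\sum_{j=0}^{m-1}\bnu_2(n^2-j^2).
\end{equation*}
\begin{itemize}
\item For $j=0$ and for even $j$, the factor $n^2-j^2$ is odd and contributes nothing.
\item For odd $j$, both $n$ and $j$ are odd, so $n^2\equiv j^2\equiv1\pmod 8$ and each such factor contributes at least $3$.
\item The factor with $j=1$ contributes exactly $w+1$.
\end{itemize}
The odd $j$ in $\{1,\dots,m-1\}$ number exactly $\lfloor m/2\rfloor$, which is at least $1$ when $m\ge2$. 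The total valuation is therefore at least
\begin{equation*}
(w+1)+3\bigl(\lfloor m/2\rfloor-1\bigr)\ \ge\ w+\lfloor m/2\rfloor,
\end{equation*}
which is \eqref{eq:2wmid:T'n}.

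\emph{Main obstacle.} The only delicate point is Step 1: the derivative formula must be justified cleanly and must be a genuine integer identity. The ODE route handles both concerns, since each $T_n^{(m)}(1)$ is an integer and the division by odd numbers is exact. The rest is bookkeeping of $2$-adic valuations.
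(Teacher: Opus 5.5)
Your proof is correct and follows essentially the same route as the paper: both use the closed form $T_n^{(m)}(\pm1)=(\pm1)^{n+m}\prod_{j=0}^{m-1}\frac{n^2-j^2}{2j+1}$, take the factor $2^{w+1}$ from $j=1$, and count the remaining $\lfloor m/2\rfloor-1$ odd values of $j$. You also derive that formula from the $m$-times differentiated Chebyshev equation, which the paper simply asserts, and you use the sharper bound $8\mid n^2-j^2$ for odd $j$, whereas the paper only needs $2\mid n^2-j^2$.
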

\begin{proof}
The $m$-order derivative of Chebyshev polynomials $T_n(x)$ with $x=\pm1$ can be expressed
\begin{equation}\label{eq:T'n:+-1}
T^{(m)}_n(\pm 1) = (\pm1)^{n+m}\prod\limits_{j=0}^{m-1}\frac{n^2-j^2}{2j+1}
\end{equation}
for any $m$,
one has $T'_n(\pm 1) =n^2$.
From the definition of $w$ and $n$ is odd, one has 
\begin{equation}\label{eq:v2(n^2-1)}
\bnu_2(n^2-1) = \bnu_2(n+1) + \bnu_2(n-1) \geq w+1,
\end{equation}
yielding that
\(2^{w+1}\mid (n^2-1).\)
In \eqref{eq:T'n:+-1}, one has $2\mid (n^2-j)$ when $j$ is odd from $n$ is odd.
It means $2^{\lfloor \frac{m}{2}\rfloor-1}\mid
\prod\limits_{j=2}^{m-1}{(n^2-j^2)}$.
It yields from \eqref{eq:T'n:+-1} and \(2^{w+1}\mid (n^2-1)\) that $2^{w+\lfloor \frac{m}{2}\rfloor} \mid T^{(m)}(\pm1)$ for any $m\ge 2$.
\end{proof}

\begin{lemma}\label{pro:2w:ak}
As for any Chebyshev polynomial $T_n(x)$, one has
$2^w\mid (a_1-1)$
and
$2^w\mid a_i$,
where $w =\max\{\bnu_2(n-1), \bnu_2(n+1)\}$.
$a_i$ denotes the coefficient of $x^i$ in the Chebyshev polynomial and $i\in\{3, 5, \cdots, n\}$.
\end{lemma}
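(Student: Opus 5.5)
The plan is to obtain closed forms for the odd-index coefficients from the Chebyshev differential equation, and then count powers of $2$. Throughout, $n$ is odd, so $T_n(x)$ is an odd polynomial, $T_n(x)=\sum_{m\ge 0}a_{2m+1}x^{2m+1}$.

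\emph{Step 1: a recurrence for the coefficients.} I would use the classical identity
\[
(1-x^2)T_n''(x)-xT_n'(x)+n^2T_n(x)=0 .
\]
Comparing coefficients of $x^{i}$ gives
\[
(i+2)(i+1)a_{i+2}-i(i-1)a_i-i\,a_i+n^2a_i=0,
\]
that is, $a_{i+2}=\frac{(i^2-n^2)\,a_i}{(i+1)(i+2)}$. Iterating from $i=1$ yields
\[
a_{2m+1}=a_1\prod_{r=1}^{m}\frac{(2r-1)^2-n^2}{(2r)(2r+1)}
=\frac{a_1}{(2m+1)!}\prod_{r=1}^{m}\bigl((2r-1)^2-n^2\bigr).
\]
The value $a_1=T_n'(0)=(-1)^{(n-1)/2}n$ comes from $T_n(\cos\theta)=\cos n\theta$, or equivalently from \eqref{eq:cheby:power} with $i=(n-1)/2$.

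\emph{Step 2: the claim $2^w\mid(a_1-1)$.} I would split according to $n \bmod 4$.
\begin{itemize}
\item If $n\equiv1\pmod 4$, then $(n-1)/2$ is even, so $a_1=n$. Here $w=\bnu_2(n-1)$, hence $n\equiv1\pmod{2^w}$.
\item If $n\equiv3\pmod 4$, then $a_1=-n$. Here $w=\bnu_2(n+1)$, hence $-n\equiv1\pmod{2^w}$.
\end{itemize}

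\emph{Step 3: the claim $2^w\mid a_i$ for odd $i=2m+1\ge3$.} Since $a_i$ is an integer, it suffices to show that the $2$-adic valuation of the right-hand side is at least $w$.
\begin{itemize}
\item The factor with $r=1$ is $1-n^2$, and by \eqref{eq:v2(n^2-1)} its valuation is at least $w+1$.
\item Each factor with $r\ge2$ is a difference of two odd squares, so it is divisible by $8$ and contributes valuation at least $3$.
\item The numerator therefore has valuation at least $w+1+3(m-1)$.
\item The denominator satisfies $\bnu_2((2m+1)!)=\bnu_2((2m)!)\le 2m-1$, since $\bnu_2(N!)\le N-1$.
\item The coefficient $a_1$ is odd, so it contributes nothing.
\end{itemize}
Hence $\bnu_2(a_{2m+1})\ge w+1+3m-3-(2m-1)=w+m-1\ge w$ for $m\ge1$.

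The main obstacle is Step 1, namely getting a clean product formula. The alternative of working directly with the factorial form \eqref{eq:cheby:power} leads to awkward binomial valuations. The differential-equation recurrence sidesteps this, and its only delicate point is the boundary behaviour of the recurrence at $i=n$, where it correctly forces $a_{n+2}=0$. Once the product formula is in hand, the valuation count is routine. The slack $m-1$ shows the bound is far from tight for large $i$.
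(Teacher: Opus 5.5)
Your proof is correct, and it takes a somewhat different route from the paper's. Your treatment of $a_1$, splitting on $n \bmod 4$, is the same as the paper's and is stated more cleanly.

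The difference is in the higher coefficients. The paper starts from the closed form \eqref{eq:cheby:coeff}, which carries an explicit factor $2^{2j}$. That gives $\bnu_2(a_{2j+1})\ge 2j$ and settles every index with $2j\ge w$ at once. Only when $2j<w$ does the paper use the consecutive ratio $a_{2j+3}/a_{2j+1}$, which is the same two-term recurrence you derive from the differential equation. It shows step by step that the valuation cannot drop, starting from $\bnu_2(a_3)\ge w$.

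That step-by-step monotonicity really does need $2j<w$, because a single ratio can have negative valuation. For example, $n=19$ gives $a_{17}=-a_{15}/2$.

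You instead telescope the recurrence into one product and count valuations globally:
\begin{itemize}
\item $w+1$ from the $r=1$ factor;
\item at least $3$ from each difference of odd squares;
\item at most $2m-1$ lost to $(2m+1)!$.
\end{itemize}
This buys several things. It removes the case split and never needs the binomial form. It gives the stronger bound $\bnu_2(a_{2m+1})\ge w+m-1$, so in fact $2^{w+1}\mid a_i$ for $i\ge 5$. It also avoids the paper's step that writes the valuation of a difference as an equality with a minimum, where only $\ge$ holds.

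The cost is importing the Chebyshev differential equation, which the paper never states. The same recurrence could be read directly off \eqref{eq:cheby:coeff}, as the paper does for its ratio step.
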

\begin{proof}
Set $n-2i=2j+1$ in \eqref{eq:cheby:power}, one has 
\begin{equation}\label{eq:cheby:power:2}
T_n(x)=\frac{n}{2}\sum_{j=0}^{(n-1)/2} (-1)^{\frac{n-1}{2}+j} \frac{\frac{n+2j-1}{2}!}{
\frac{n-2j-1}{2}!(2j+1)!} (2x)^{2j+1},
\end{equation}
where $j\in\{0, 1, \cdots, \frac{n-1}{2}\}.$
Namely,
\begin{equation}\label{eq:cheby:coeff}
a_{2j+1}  = (-1)^{\frac{n-1}{2}+j}\binom{\frac{n-1}{2}+j}{2j}\cdot(2j+1)^{-1}\cdot n \cdot 2^{2j}.
\end{equation}
From the definition of $w$, one has 
$n= \mp 1+ h2^w$ and $w\ge 2$, where $h\in \mathbb{Z}$. 
It means $a_1=(-1)^\frac{n-1}{2}n = 1\mp 2^wh$, where $h$ is an integer.
Thus, $2^w \mid (a_1-1)$. 

Next, it follows from $2j+1$ is odd and \eqref{eq:cheby:coeff} that $\bnu_2(a_{2j+1}) \geq 2j$. The proof is divided into two cases depending on the value of $j$:
\begin{itemize}
\item $2j\geq w$: 
One has $\bnu_2(a_{2j+1}) \geq 2j\ge w$, namely $2^w\mid a_{2j+1}$.
\item $0\leq 2j < w$: 
One can get $\bnu_2(n^2-1)= w+1 \geq 2(j+1)$ from \eqref{eq:v2(n^2-1)} and $w$ is an integer.
It yields from $2(j+1) \geq \bnu_2(2(j+1))$ that 
\begin{equation}\label{eq:leq1}
\bnu_2(n^2-1)\geq \bnu_2(2(j+1)).
\end{equation}
Since
\(
\frac{a_{2j+3}}{a_{2j+1}} = 
-\frac{n^2-(2j+1)^2}{(2j+3)(2j+2)} = \frac{n^2-1-4j(j+1)}{2(j+1)},
\)
one has 
\(
\bnu_2(a_{2j+3})-\bnu_2(a_{2j+1})= 
\min\{\bnu_2(n^2-1), \bnu_2(4j(j+1))\}-\bnu_2(2(j+1)).  
\)
Then, according to \eqref{eq:leq1} and $\bnu_2(4j(j+1))\ge\bnu_2(2(j+1))$, one has 
$$\bnu_2(a_{2j+3})-\bnu_2(a_{2j+1})\geq0.$$
Note that 
$a_{3} = (-1)^{\frac{n-1}{2}+1}\frac{n(n+1)(n-1)}{3\cdot2}
$, 
one has  $\bnu_2(a_3)\ge w$ from \eqref{eq:v2(n^2-1)}.
Thus, one can get $\bnu_2(a_{2j+3})\ge \bnu_2(a_{2j+1}) \ge \bnu_2(a_3)\ge w$ and $2^w\mid a_i$.
\end{itemize}
\end{proof}

\begin{theorem}\label{theorem:p2:explan}
If a Chebyshev permutation polynomial with an initial value $x_0\in \mathbb{Z}_{2^k}$ satisfies
\begin{equation}\label{eq:begin:explan:p2}
    \begin{cases}
    T_n(x_0)\equiv x_0\pmod{2^s}; \\
    T_n(x_0)\not\equiv x_0\pmod{2^{s+1}},
    \end{cases}
\end{equation}
then
\begin{equation}\label{eq:2ks}
   T_n^{2^{k-s}}(x_0)\equiv x_0\pmod{2^k}, 
\end{equation}
and the least period of the sequence $S_{2^k}(x_0; n)$ is $2^{k-s}$, 
where $k\geq s$ and $v\geq 0$.
\end{theorem}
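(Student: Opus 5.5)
The plan is to track the exact $2$-adic valuation of the displacement $e_j := T_n^{2^j}(x_0)-x_0$ and show by induction on $j$ that $\bnu_2(e_j)=s+j$ for $j\ge 0$. The base case $j=0$ is exactly hypothesis \eqref{eq:begin:explan:p2}.

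\textbf{Step 1: derivative of an iterate.} Write $h=T_n^{2^j}$, a polynomial with integer coefficients. For any integer polynomial $h$ one has the exact expansion
\[
h(x+d)-h(x)=d\,h'(x)+d^2\,R(x,d)
\]
with $R\in\mathbb{Z}[x,d]$. This holds because $\binom{i}{m}$ is an integer, so no $m!$ denominators appear.

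Next I need $h'(x)\equiv 1 \pmod 4$ for every integer $x$. For $T_n$ itself, $T_n'(x)=\sum_i i\,a_i x^{i-1}$. By Lemma~\ref{pro:2w:ak}, $a_1\equiv1$ and $a_i\equiv 0 \pmod{2^w}$ for $i\ge3$. Since $w\ge2$, this gives $T_n'(x)\equiv 1\pmod 4$ for all $x$. By the chain rule, $h'(x)$ is a product of values $T_n'(\cdot)$ at integer points, so $h'(x)\equiv 1\pmod 4$ as well.

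\textbf{Step 2: induction step.} Note that $T_n^{2^{j+1}}(x_0)=h(h(x_0))=h(x_0+e_j)$. Hence
\[
e_{j+1}=e_j+\bigl(h(x_0+e_j)-h(x_0)\bigr)=e_j\bigl(1+h'(x_0)\bigr)+e_j^2R(x_0,e_j).
\]
Since $1+h'(x_0)\equiv 2\pmod 4$, the first term has valuation exactly $s+j+1$. The second term has valuation at least $2(s+j)$. This is $\ge s+j+2$ because $s\ge2$, which was established before the theorem via congruence \eqref{eq:s2}. Therefore $\bnu_2(e_{j+1})=s+j+1$.

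This is the step I expect to be the crux. It is the only place where the hypothesis $s\ge 2$ and the bound $w\ge2$ (through Lemma~\ref{pro:2w:ak}) are used. Without them, the quadratic term could cancel the linear one and the valuation would not increase by exactly one.

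\textbf{Step 3: the period.} Taking $j=k-s$ gives $\bnu_2(e_{k-s})\ge k$, i.e., \eqref{eq:2ks}. Since $T_n$ is a permutation of $\mathbb{Z}_{2^k}$, the set of $N$ with $T_n^N(x_0)\equiv x_0 \pmod{2^k}$ is exactly the set of multiples of the least period $L$. So $L$ divides $2^{k-s}$. If $k>s$, then taking $j=k-s-1$ gives $\bnu_2(e_{k-s-1})=k-1<k$, so $2^{k-s-1}$ is not a period and $L=2^{k-s}$. When $k=s$ the statement is immediate with $L=1$.
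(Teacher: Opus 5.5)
Your proof is correct and follows essentially the same route as the paper. Both argue by induction on $t$ for $T_n^{2^t}$: a Taylor expansion of $\Phi=T_n^{2^m}$ at $x_0$ with $\Phi'(x_0)\equiv 1\pmod 4$ makes the linear term contribute exactly one extra factor of $2$, and $s\ge 2$ pushes the higher-order terms past $2^{s+m+1}$. Your version is tighter in two places the paper only asserts: you derive $T_n'\equiv 1\pmod 4$ from Lemma~\ref{pro:2w:ak}, and you spell out the divisibility argument that fixes the least period at exactly $2^{k-s}$.
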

\begin{proof}
The Taylor expansion of a polynomial $f(x)$ at a point $x$ is given by
\begin{equation}\label{eq:TaylorExplan}
 f(x+h) = f(x) + h \cdot f'(x) + \sum^{\deg(f)}_{j=2} \frac{f^{(j)}(x)}{j!} \cdot h^j,
\end{equation}
where $\deg(f)$ denotes the degree of the polynomial $f(x)$.

Next, mathematical induction on $t$, one can prove
\begin{equation}\label{eq:relation:period:p2}
\begin{cases}
T_n^{2^{t}}(x_0)\equiv x_0\pmod{2^{s+t}}; \\
T_n^{2^{t}}(x_0)\not\equiv x_0\pmod{2^{s+t+1}},
\end{cases}
\end{equation}
where $t$ is a non-negative integer.
The base case of induction, $t = 0$, is directly given by the conditions in Eq.~\eqref{eq:begin:explan:p2}, which state that
\[
\begin{cases}
    T_n(x_0) \equiv x_0 \pmod{2^s}, \\
    T_n(x_0) \not\equiv x_0 \pmod{2^{s+1}}.
\end{cases}
\]
Assuming that this pair of modular relations holds for $t = m$, namely
\[
\begin{cases}
 T_n^{2^m}(x_0) \equiv x_0 \pmod{2^{s+m}}, \\
 T_n^{2^m}(x_0) \not\equiv x_0 \pmod{2^{s+m+1}},
\end{cases}
\]
the goal is to demonstrate that \eqref{eq:relation:period:p2} holds for $t = m+1$.

Let $\Phi(x_0) = T_n^{2^m}(x_0)$. From the inductive hypothesis, it follows that
\[
 \Phi(x_0) = x_0 + q_m \cdot 2^{s+m},
\]
where $q_m$ is an odd integer. The next iteration of $T_n$, denoted $T_n^{2^{m+1}}(x_0)$, is given by
\[
 T_n^{2^{m+1}}(x_0) = T_n^{2^m}(T_n^{2^m}(x_0)) = \Phi(\Phi(x_0)) = \Phi(x_0 + q_m \cdot 2^{s+m}).
\]

Now, the Taylor expansion \eqref{eq:TaylorExplan} is applied with $x = x_0$ and $h = q_m \cdot 2^{s+m}$. Expanding $\Phi(x_0 + q_m \cdot 2^{s+m})$ yields
\begin{multline*}
\Phi(x_0 + q_m \cdot 2^{s+m}) = \Phi(x_0) + q_m \cdot 2^{s+m} (\Phi'(x_0)+1) \\
+ \sum_{j=2}^{\deg(\Phi)} \frac{\Phi^{(j)}(x_0)}{j!} (q_m \cdot 2^{s+m})^j.   
\end{multline*}
The derivative of $\Phi(x_0)$ is given by
\[
 \Phi'(x_0) = \prod_{i=0}^{2^m-1} T'_n(x_i),
\]
and since $T'_n(x_i) \equiv 1 \pmod{4}$, it follows that
\[
 \Phi'(x_0) \equiv 1 \pmod{4}.
\]
Since $T_n(x_0)\equiv x_0\pmod4$, it follows from~\eqref{eq:begin:explan:p2} that $s\geq2$. 
Combining with the condition $m\geq1$, then $\bnu_2\left(\frac{\Phi^{(j)}(x_0)}{j!} (q_m 2^{s+m})^j\right)\geq 1+(s+m)j-j+s_2(j)\geq2(s+m)\geq s+m+2$. 
Thus, $\Phi(x_0 + q_m \cdot 2^{s+m})=T_n^{2^{m+1}}(x_0) \equiv x_0 + q_m \cdot 2^{s+m+1}\pmod{2^{s+m+2}}$. 
This means \eqref{eq:relation:period:p2} remains valid for $t = m+1$.

To conclude, this result is applied to the specific case where $t = k - s$ and $t = k - s - 1$. From the induction, one deduces
\[
\begin{cases}
T_n^{2^{k-s}}(x_0) \equiv x_0 \pmod{2^k}, \\
T_n^{2^{k-s}}(x_0) \not\equiv x_0 \pmod{2^{k+1}},
\end{cases}
\]
and
\[
\begin{cases}
T_n^{2^{k-s-1}}(x_0) \equiv x_0 \pmod{2^{k-1}}, \\
T_n^{2^{k-s-1}}(x_0) \not\equiv x_0 \pmod{2^k}.
\end{cases}
\]
Thus, the least period of the sequence $S_{2^k}(x_0; n)$ is $2^{k-s}$, and the least period of $S_{2^{k+v}}(x_0; n)$ is $2^v$ times that of $S_{2^k}(x_0; n)$.
\end{proof}

\subsection{The Period Properties of Chebyshev Integer Sequence over Ring $\mathbb{Z}_{3^k}$}

Lemma~\ref{lemma:p3:Tn'1} provides an exact estimation of the power of 3 dividing the derivatives of Chebyshev polynomials at the endpoints. Additionally, Lemma \ref{lemma:dev:m:0mod3:w+2} characterizes the congruence properties of the first derivative at the origin and evaluates their influence on the divisibility of higher-order derivatives. Finally, Theorem \ref{Theorem:periodValue} establishes an explicit expression for the period distribution of Chebyshev sequences.

\begin{lemma}\label{lemma:p3:Tn'1}
As for any Chebyshev polynomial $T_n(x)$, one has 
\begin{equation*}\label{eq:3w:T'n}
\bnu_3\left(\frac{T^{(m)}_n(\pm 1) \cdot 3^m}{m!}\right)\ge w+2
\end{equation*}
for any $m\ge 3$.
where $w$ is a positive integer.
\end{lemma}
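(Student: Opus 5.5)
The plan is to obtain an exact closed form for $T^{(m)}_n(\pm1)/m!$ and then count powers of $3$. I start from the endpoint formula \eqref{eq:T'n:+-1} used in the proof of Lemma~\ref{pro:2w:Tn'1}:
$T^{(m)}_n(\pm 1) = (\pm1)^{n+m}\prod_{j=0}^{m-1}\frac{n^2-j^2}{2j+1}$.
Since $\prod_{j=0}^{m-1}(2j+1)=\frac{(2m)!}{2^m m!}$, this gives
\begin{equation*}
\frac{T^{(m)}_n(\pm 1)\cdot 3^m}{m!} = \pm\frac{3^m 2^m\, n^2\prod_{j=1}^{m-1}(n^2-j^2)}{(2m)!}.
\end{equation*}
Here $T_n$ is a permutation polynomial modulo $3^k$, so $n\equiv\pm1\pmod 6$. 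Hence $3\nmid n$, and $\bnu_3(n^2-1)=w\ge 1$ by the remark opening this section. The factors $n^2$ and $2^m$ do not contribute powers of $3$. Therefore
\begin{equation*}
\bnu_3\!\left(\frac{T^{(m)}_n(\pm 1) 3^m}{m!}\right) = m + w + \bnu_3\!\left(\prod_{j=2}^{m-1}(n-j)(n+j)\right) - \bnu_3((2m)!).
\end{equation*}

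Next I bound the middle product from below. The integers $n-m+1,\dots,n-2$ form one block of $m-2$ consecutive integers, and $n+2,\dots,n+m-1$ form another. The product of $m-2$ consecutive integers is divisible by $(m-2)!$, so the middle term is at least $2\bnu_3((m-2)!)$. The claim then reduces to the purely arithmetic inequality
\begin{equation*}
E(m):= m + 2\bnu_3((m-2)!) - \bnu_3((2m)!) \ge 2 \qquad (m\ge 3),
\end{equation*}
which no longer depends on $n$.

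To prove $E(m)\ge 2$ I use Legendre's formula $\bnu_3(N!)=\frac{N-s_3(N)}{2}$, where $s_3$ is the base-$3$ digit sum. This yields
\begin{equation*}
E(m) = m-2 - s_3(m-2) + \tfrac{1}{2}s_3(2m).
\end{equation*}
Since $s_3(2m)\ge 1$, it suffices that $m-2-s_3(m-2)\ge 2$, i.e.\ $s_3(m-2)\le m-4$. This follows from $s_3(N)\le 2(\lfloor\log_3 N\rfloor+1)$ for all sufficiently large $m$; an estimate suggests roughly $m\ge 8$. The small cases $m=3,\dots,7$ I check directly:
\begin{itemize}
\item $m=3$: $3-1+0=2$;
\item $m=4$: $4-2+0=2$;
\item $m=5$: $5-4+2=3$;
\item $m=6$ and $m=7$: similar direct checks.
\end{itemize}

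The main obstacle is this last arithmetic step. The cases $m=3,4$ are tight, with equality $E(m)=2$, so no slack can be wasted. The crude block bound $(m-2)!^2$ must be combined carefully with the exact Legendre count rather than the rough estimate $\bnu_3(N!)<N/2$. If the block bound turns out too weak for some intermediate $m$, the fallback is to use the sharper fact that $2m-3$ consecutive integers (excluding $n$) sandwich $n$, so their product is divisible by $(2m-1)!/\gcd$-type terms. Alternatively, I would rewrite the product as $(2m-1)!\binom{n+m-1}{2m-1}/n$ and extract the $3^w$ from the $n^2-1$ factor inside the binomial.
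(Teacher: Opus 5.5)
Your closed form is correct, and so is the reduction to $E(m)\ge 2$ as a sufficient condition. But $E(m)\ge 2$ is false at $m=3$, so the proof has a hole exactly there.

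\begin{itemize}
\item Your check ``$3-1+0=2$'' uses $\bnu_3(6!)=1$. In fact $6!=720=2^4\cdot 3^2\cdot 5$, so $E(3)=3+2\bnu_3(1!)-\bnu_3(6!)=1$. Your digit-sum form agrees: $1-s_3(1)+\tfrac12 s_3(6)=1$.
\item Sharper factorial bookkeeping cannot repair this, because the loss is in the block bound itself. For $m=3$ the middle product is just $(n-2)(n+2)$, and the bound $\bnu_3((n-2)(n+2))\ge 2\bnu_3(1!)=0$ discards the factor of $3$ the statement needs.
\item Concretely, take $n=5$, so $w=1$. Then $T_5'''(1)=840$ and $840\cdot 27/6=3780=2^2\cdot 3^3\cdot 5\cdot 7$. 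The valuation is exactly $w+2=3$, while your argument only certifies $w+1=2$.
\end{itemize}

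The missing ingredient is the residue of $n$ modulo $3$ inside the product, not merely $3\nmid n$. Since $n\equiv\pm1\pmod 3$, one has $3\mid n^2-j^2$ whenever $3\nmid j$, in particular $3\mid n^2-4$. Your exact formula then gives $\bnu_3=3+w+\bnu_3(n^2-4)-2\ge w+2$ at $m=3$.

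This residue fact is what the paper's proof rests on. The paper settles $3\le m\le 5$ by direct computation. For $m\ge 6$ it bounds $\bnu_3\bigl(\prod_{j=2}^{m-1}(n^2-j^2)\bigr)$ below by the number of $j\in\{2,\dots,m-1\}$ with $3\nmid j$, combined with the crude estimate $\bnu_3((2m-1)!)\le m-1$.

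Once $m=3$ is patched this way, your route is a legitimate alternative for $m\ge 4$. It is independent of $n$ and uses an exact Legendre count instead of residue counting. You have $E(4)=2$. For $m\ge 5$ you need neither the $\log_3$ estimate nor separate checks of $m=6,7$: since $N-s_3(N)=2\bnu_3(N!)\ge 2$ for $N=m-2\ge 3$, you already get $E(m)\ge 2+\tfrac12 s_3(2m)>2$.
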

\begin{proof}
The validity of the Lemma for $3 \leq m \leq 5$ is verified through direct calculation using \eqref{eq:T'n:+-1}. Consequently, the subsequent proof concentrates on the case $m \geq 6$.

Referring to \eqref{eq:T'n:+-1} and $\bnu_3(n^2-1)=w$, one obtains
$\bnu_3\left(\frac{T^{(m)}_n(\pm 1) \cdot 3^m}{m!}\right) \ge w+\bnu_3(B)-\bnu_3((2m-1)!!)+m-\bnu_3(m!)$, 
where $B=\prod\limits_{j=2}^{m-1}(n^2-j^2)$.
Note that
$\bnu_3(m!)+\bnu_3((2m-1)!!)=\bnu_3(m)+\bnu_3((2m-1)!)$ from $(2m-1)!!=\frac{m\cdot(2m-1)!}{2^{m-1} \cdot m!}$.
Thus, one has 
$\bnu_3\left(\frac{T^{(m)}_n(\pm 1) \cdot 3^m}{m!}\right) \ge w+\bnu_3(B)-\bnu_3((2m-1)!)+m-\bnu_3(m).$
According to Legendre's formula, 
$\bnu_3(x!)=\frac{x-s_3(x)}{2}$, 
one has $\bnu_3((2m-1)!)=\frac{2m-1-s_3(2m-1)}{2}$.
As $s_3(x)\ge 1$ for $x>0$, it further deduces
$\bnu_3((2m-1)!)\leq m-1 $. 
Hence, 
\begin{equation}\label{v_3:Tm:m!}
\begin{aligned}
\bnu_3\left(\frac{T^{(m)}_n(\pm 1) \cdot 3^m}{m!}\right)
\geq w+1+\bnu_3(B)-\bnu_3(m).
\end{aligned}
\end{equation}
Let $\bnu_3(m)=h$,  one has $m=a3^h$, where $a\geq 1$ and $h\ge 0$.
From $n^2-1\equiv 0 \pmod {3^w}$, one has $n^2-j^2\equiv 0\pmod 3$ if $j \equiv \pm 1 \pmod 3$.
Thus, one gets 
\begin{equation}\label{T1:rela:m!}
\bnu_3(B)=\lfloor \frac{2m-3}{3}\rfloor = 
\lfloor \frac{2\cdot a3^h-3}{3}\rfloor.
\end{equation}
If $h=0$, one has $\lfloor\frac{2m-3}{3}\rfloor\geq h+1$ from $m\ge 3$.
If $h=1$, one has $a \geq 2$ from $m\ge 6$, and $\lfloor\frac{2\cdot a3^h-3}{3}\rfloor=2a - 1 \geq h+1$.
If $h>1$, one has the inequality $2a \cdot 3^{h-1} - 1 \geq h+1$.
Thus, one can obtains $\bnu_3(B)\geq \bnu_3(m)+1$.
Then, it further gets
\(
\bnu_3\left(\frac{T^{(m)}_n(\pm 1) \cdot 3^m}{m!}\right) \geq w+2
\)
for any $m \geq 6$ from \eqref{v_3:Tm:m!}.
\end{proof}

\begin{lemma}\label{lemma:dev:m:0mod3:w+2}
Let \(
w =\max\{\nu_3(n-1), \nu_3(n+1)\}
\),
one has $T'_{n^{l_0}}(0) \equiv 1\pmod {3^w}$
and 
\(
\bnu_3\left(\frac{T^{(m)}_{n^{l_0}}(0) \cdot 3^m}{m!}\right)\ge w+2
\)
for any $m\ge2$, where $l_0=\ord(T'_n(0))$.
\end{lemma}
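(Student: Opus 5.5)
The plan is to reduce everything to the coefficients of the odd polynomial $T_N(x)$ with $N=n^{l_0}$, using the semigroup property $T^{l_0}_n=T_{n^{l_0}}$. Since $n$ is odd, $N$ is odd and $T_N$ contains only odd powers of $x$. Hence $T^{(m)}_N(0)=0$ for every even $m$; in particular the case $m=2$ and all even $m$ hold trivially. For odd $m=2j+1$ we have $T^{(m)}_N(0)/m!=a_{2j+1}(N)$, the coefficient of $x^{2j+1}$ in $T_N$. Formula \eqref{eq:cheby:coeff}, together with the ratio $a_{2j+3}/a_{2j+1}$ computed in the proof of Lemma~\ref{pro:2w:ak}, is valid for any odd degree. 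Telescoping that ratio from $a_1$ gives
\begin{equation*}
a_{2j+1}(N)=\pm\frac{N\prod_{i=0}^{j-1}\bigl(N^2-(2i+1)^2\bigr)}{(2j+1)!},
\end{equation*}
which one can check against $a_3=\pm N(N^2-1)/6$.

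\textbf{First derivative.} Here $T'_n(0)=a_1=(-1)^{(n-1)/2}n$, and $0$ is a fixed point of every odd $T_n$. By the chain rule, or directly from the formula above, $T'_N(0)=(T'_n(0))^{l_0}=a_1^{l_0}$. I read $l_0=\ord(T'_n(0))$ as the multiplicative order of $a_1$ modulo $3$, so $l_0\in\{1,2\}$. If $l_0=1$, then $a_1\equiv1\pmod 3$, so $3\nmid a_1+1$. Since $a_1=\pm n$, we get $\bnu_3(a_1-1)=\bnu_3(a_1^2-1)=\bnu_3(n^2-1)=w$. If $l_0=2$, then $a_1^2=n^2\equiv1\pmod{3^w}$ directly. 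In both cases $T'_{n^{l_0}}(0)\equiv1\pmod{3^w}$. If $\ord$ is instead meant modulo $3^k$, then $a_1^{l_0}\equiv1\pmod{3^k}$ with $k\ge w$ in the relevant range, so the same conclusion holds.

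\textbf{Higher derivatives.} For odd $m=2j+1\ge3$, the displayed product gives
\begin{equation*}
\bnu_3\!\left(\frac{T^{(m)}_N(0)\,3^m}{m!}\right)\ge m+\bnu_3(N)+\bnu_3(N^2-1)-\bnu_3(m!),
\end{equation*}
where I keep only the factor $i=0$ of the product and use $\bnu_3(N)=0$. Since $N^2=n^{2l_0}$ and $n^2\equiv1\pmod{3^w}$, we have $\bnu_3(N^2-1)\ge w$. Legendre's formula gives $\bnu_3(m!)=(m-s_3(m))/2\le(m-1)/2$. Therefore the right-hand side is at least $w+(m+1)/2\ge w+2$ for $m\ge3$.

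The only delicate point I anticipate is pinning down the meaning of $\ord$ and confirming $T'_{n^{l_0}}(0)=a_1^{l_0}$, i.e.\ that $0$ is fixed and the sign bookkeeping in $(-1)^{(N-1)/2}N$ matches. The valuation estimate itself is routine once the closed product form of the coefficients is written down.
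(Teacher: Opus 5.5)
Your proof is correct and follows essentially the paper's route. The paper also splits on whether $T'_n(0)=(-1)^{(n-1)/2}n$ is $\equiv 1$ or $\equiv -1 \pmod 3$ (so $l_0\in\{1,2\}$). For the higher derivatives it uses the recurrence $T^{(m)}_{N}(0)=-(N^2-(m-2)^2)\,T^{(m-2)}_{N}(0)$, which telescopes to your product formula, to extract the factor $N^2-1$; it then applies $\bnu_3(3^m/m!)\ge 2$, which is slightly weaker than your bound $w+(m+1)/2$, and you also handle $n\equiv\pm1\pmod{3^w}$ uniformly rather than treating only one sign.
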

\begin{proof}
From the definition of $w$, one has $n\equiv\pm1\pmod{3^w}$.
Note that the case $n\equiv-1\pmod{3^w}$ is similar to $n\equiv1\pmod{3^w}$, only the latter case is considered in the following proof.
 
Setting $n=1+h\cdot3^w$, one has  $h\bmod4\in\{0, 2\}$ as $n$ is odd.
It yields from~\eqref{eq:cheby:power} that $T'_n(0)=(-1)^{\frac{h\cdot3^w}{2}}n$.
Next, one can prove $T'_{n^{l_0}}(0)\equiv1\pmod{3^w}$ depending on the value of $h$:
\begin{itemize}
\item  $h\equiv0\pmod4$: One has 
$T'_n(0)=n$, and $n\equiv1\pmod3$ from $n\equiv1\pmod{3^w}$.
So, $l_0=\ord(T'_n(0))=1$, which means $T'_{n^{l_0}}(0)=n\equiv1\pmod{3^w}$. 
\item $h\equiv2\pmod4$: 
One has $T'_n(0)=-n$, and $-n\equiv-1\pmod3$ from $n\equiv1\pmod{3^w}$.
So, $l_0=\ord(T'_n(0))=2$.
So, $T'_{n^2}(0)=(-1)^{\frac{n^2-1}{2}}n^2 =n^2$. 
It yields from $n\equiv1\pmod{3^w}$ that $T'_{n^2}(0)\equiv1\pmod{3^w}$.
\end{itemize}
Hence, $T'_{n^{l_0}}(0)=n\equiv1\pmod{3^w}$. 
It means $n^{l_0}\equiv1\pmod{3^w}$ and $n^{2l_0}\equiv1\pmod{3^w}$.
Furthermore, from \cite[Lemma 3]{cqli:Cheby:TIT25}, one has $T^{(m)}_{n^{l_0}}(0) = -({n^{2l_0}} - (m-2)^2) T^{(m-2)}_{n^{l_0}}(0)$.
Then one can get $\bnu_3(T^{(m)}_{n^{l_0}}(0)) \ge \bnu_3(n^{2l_0} - 1) \ge w$ for $m\equiv1\pmod2$ and $T^{(m)}_{n^{l_0}}(0)=0$ for $m\equiv0\pmod2$. Hence, $\bnu_3(T^{(m)}_{n^{l_0}}(0))\ge w$.
Combining with $\bnu_3(\frac{3^m}{m!})\ge 2$, one can get \(
\bnu_3\left(\frac{T^{(m)}_{n^{l_0}}(0) \cdot 3^m}{m!}\right)\ge w+2.
\)
\end{proof}

\begin{theorem}\label{Theorem:periodValue}
Given $s\in\mathcal{Z}_{p^k}\backslash \{0,1\}$, the least period of sequence $S_{3^k}(s; n)$ is $l_s\cdot 3^{k-v_s}$ when $k\ge v_s$, where $l_s=\ord(T'_n(s))$ and $v_s=\bnu_p(T_n^{l_s}(s)-s)$.
\end{theorem}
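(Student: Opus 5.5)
We argue in the same spirit as Theorem~\ref{theorem:p2:explan}, with $T_n$ replaced by $\Phi=T_n^{l_s}=T_{n^{l_s}}$, which is again a Chebyshev permutation polynomial mod $3^k$. We read $\ord(T'_n(s))$ as the multiplicative order of $T'_n(s)$ modulo $3$, so $l_s\in\{1,2\}$; this matches the usage in Lemma~\ref{lemma:dev:m:0mod3:w+2}.

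\textbf{Step 1: reduction to $\Phi$.} Since $T_n$ permutes $\mathbb{Z}_3$ and $T_n(s)\equiv s\pmod 3$ in the relevant cases, the orbit of $s$ modulo $3$ is trivial. We first check that the least period of $S_{3^k}(s;n)$ is $l_s$ times the least period of $s$ under $\Phi$. The chain rule gives $\Phi'(s)=\prod_{i=0}^{l_s-1}T_n'(T_n^i(s))\equiv T_n'(s)^{l_s}\equiv1\pmod 3$. If $T_n'(s)\equiv -1\pmod 3$, then $T_n(s)-s\equiv$ a nonzero multiple of $3^{v}$ with sign flipping, so an odd number of iterations cannot return to $s$ once $k$ exceeds the first level of agreement. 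We make this precise by writing $T_n(s)=s+\delta$ and observing $T_n^{j}(s)-s\equiv \delta\cdot(1+(-1)+\cdots)$ to leading order. This shows the period is a multiple of $l_s$, and then it equals $l_s$ times the $\Phi$-period.

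\textbf{Step 2: lifting under $\Phi$.} Set $v=v_s=\bnu_3(\Phi(s)-s)$. We prove by induction on $t$ that $\bnu_3(\Phi^{3^t}(s)-s)=v+t$ exactly. Suppose $\Phi^{3^t}(s)=s+q3^{v+t}$ with $3\nmid q$, and write $\Psi=\Phi^{3^t}$. Iterating three times and Taylor expanding as in \eqref{eq:TaylorExplan}, we get
\[
\Psi^3(s)-s = h\bigl(1+\Psi'(s)+\Psi'(s)^2\bigr)+\text{(higher-order terms)},\qquad h=q3^{v+t}.
\]
Since $\Psi'(s)\equiv1\pmod 3$, the linear factor $1+\Psi'+\Psi'^2\equiv 3\pmod 9$ has valuation exactly $1$. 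This gives the main term $3h$ of valuation $v+t+1$.

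It remains to show that the terms $\frac{\Psi^{(m)}(x)}{m!}h^m$ for $m\ge2$ have valuation at least $v+t+2$. This is where Lemmas~\ref{lemma:p3:Tn'1} and \ref{lemma:dev:m:0mod3:w+2} enter. They bound $\bnu_3\bigl(\frac{T^{(m)}(x)3^m}{m!}\bigr)\ge w+2$ at $x=\pm1$ and $x=0$, the three residue classes mod $3$. We transport these bounds to $x\equiv s$ via the Taylor expansion of the derivatives around the class representative, using $h^m=3^{m}\cdot 3^{m(v+t-1)}q^m$. Alternatively, when $v\ge2$ we can argue directly with $\bnu_3(h^m/m!)\ge m(v+t)-(m-1)/2\ge v+t+2$ for $m\ge2$, together with the integrality of $\Psi^{(m)}/m!$.

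\textbf{Step 3: conclusion.} Taking $t=k-v$ and $t=k-v-1$ shows that the $\Phi$-period of $s$ modulo $3^k$ is exactly $3^{k-v}$. Combining with Step~1 gives the least period $l_s3^{k-v_s}$.

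\textbf{Main obstacle.} The hardest part is the higher-order control in Step~2 when $v=1$. In that case $h^2/2$ has valuation only $2(1+t)$, which is not enough at $t=0$, so the derivative divisibility from the lemmas is essential. The first thing we would try is to pass from $\pm1,0$ to general $s$ in the same residue class through the Taylor series of $\Psi^{(m)}$. The exclusion $s\notin\{0,1\}$ should correspond exactly to the degenerate cases $v_s=\infty$, that is, fixed points.
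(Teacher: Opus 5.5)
\textbf{Gap: $v_s\ge 2$ is never proved.} Your route is essentially the paper's: pass to $T_{n^{l_s}}$, expand the third iterate, use $\Psi'(s)\equiv 1\pmod 3$ so that $1+\Psi'(s)+\Psi'(s)^2\equiv 3\pmod 9$, and induct on $t$. However, Step~2 is incomplete. You control the terms of order $m\ge 2$ only under the extra hypothesis $v_s\ge 2$, and you leave $v_s=1$ open as the ``main obstacle''.

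The missing observation is that $v_s=1$ never occurs, because $T_{n^{l_s}}(x)\equiv x\pmod 9$ for every $x$. This is the first thing the paper proves, and it is short.
\begin{itemize}
\item If $s=\pm1+3h$, then $T'_n(s)\equiv T'_n(\pm1)=n^2\equiv 1\pmod 3$, so $l_s=1$. Expanding at $\pm1$ and using $T^{(m)}_n(\pm1)/m!\in\mathbb{Z}$ gives $T_n(s)\equiv \pm1+3hn^2\equiv s\pmod 9$.
\item If $s=3h$, then $T_{n^{l_s}}(3h)\equiv 3h\,T'_{n^{l_s}}(0)=3h\,(T'_n(0))^{l_s}\equiv 3h\pmod 9$, by the definition of $l_s$.
\end{itemize}
Once $v_s\ge 2$ is known, integrality of $\Psi^{(m)}/m!$ alone gives every term with $m\ge 2$ valuation at least $m(v_s+t)\ge v_s+t+2$. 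You do not also need to divide $h^m$ by $m!$. This closes the induction. Lemmas~\ref{lemma:p3:Tn'1} and~\ref{lemma:dev:m:0mod3:w+2} are not needed here, and the transport of their bounds to general $s$ that you propose is unnecessary.

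\textbf{Step~1 needs tightening.} Your Step~1 addresses a point the paper passes over silently: that the $T_n$-period of $s$ is a multiple of $l_s$. Keep it, but state it precisely. By the computation above, $l_s=2$ forces $s\equiv 0\pmod 3$ and $T'_n(0)\equiv -1\pmod 3$. Since $T_n$ is odd with integer coefficients, for $3\mid x$ we have $T_n(x)\equiv T'_n(0)\,x\equiv -x\pmod{3^{\bnu_3(x)+1}}$ and $\bnu_3(T_n(x))=\bnu_3(x)$. Hence $T_n^j(s)\equiv(-1)^j s\pmod{3^{\bnu_3(s)+1}}$. Because $s\neq 0$ in $\Zy{k}$ gives $\bnu_3(s)+1\le k$, no odd $j$ returns to $s$. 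The relevant exponent there is $\bnu_3(s)$, not the $v$ of Step~2, which is not yet defined at that point.
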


\begin{proof}
First, based on the different values of $s$, it can be proven that $v_s \geq 2$.
\begin{itemize}
\item $s\equiv\pm1\pmod3$:
In this scenario, $s$ can be expressed as $s=3h \pm 1$ for some integer $h$. Combining the definition of $n$ yields $l_s=1$. 
Applying Taylor's formula provides
$$
T_n(\pm1+3\cdot h)=T_n(\pm1)+T'_n(\pm1)\cdot3h+\sum^n_{i=2}\frac{T^{(i)}_n(\pm1)3^ih^i}{i!}.
$$
It yields from $\frac{T^{(i)}_n(\pm1)}{i!}$ is an integer that $\bnu_3\left(\frac{T^{(i)}_n(\pm1)3^ih^i}{i!}\right)\geq2$. 
Thus, $T_n(\pm1+3\cdot h)\equiv\pm1+3hT'_n(\pm1)\pmod{3^2}$. 
Furthermore, from $T'_n(\pm1)\equiv n^2\equiv1\pmod{3^2}$, one has
$T_n(\pm1+3\cdot h)\equiv\pm1+3h\pmod{3^2}$. 
Based on the definition of $v_s$, the inequality $v_s\geq2$ is established.

\item $s\equiv0\pmod3$:
One can get $s=3h$ and $l_s\in\{1, 2\}$. 
Analogous to the preceding case, applying Taylor's formula, one has $T_{n^{l_s}}(3h)\equiv3h\cdot T'_{n^{l_s}}(0)\pmod{3^2}$. 
It yields from $T'_n(0)\equiv(-1)^{\frac{n-1}{2}}n$ that $T'_{n^{l_s}}(0)=(T'_{n}(0))^{l_s}\equiv1\pmod3$. 
The term $T'_{n^{l_s}}(0)$ can be expressed as $1+3q$ for some integer $q$. 
Therefore, $T_{n^{l_s}}(s)\equiv3h\cdot (1 + 3q)\equiv s\pmod{3^2}$, establishing $v_s\geq2$.
\end{itemize}

Then, one can prove 
\begin{equation}\label{eq:intro:p2:p3}
\begin{cases}
T_{n^{l_s}}^{3^t}(s)\equiv s\pmod{3^{v_s+t}};\\
T_{n^{l_s}}^{3^t}(s)\not\equiv s\pmod{3^{v_s+t+1}}\\
\end{cases}
\end{equation}
via mathematical induction on $t$.

When $t=0$, relation \eqref{eq:intro:p2:p3} holds.
Assume that \eqref{eq:intro:p2:p3} holds for $t=e\geq0$, namely 
$T^{3^e}_{n^{l_s}}(s)=s+q_s\cdot3^{v_s+e}$, where $q_s\not\equiv0\pmod3$.
   
Let $\Phi(x)=T_{n}^{l_x\cdot 3^e}(x)$, then $T^{l_x3^{e+1}}_{n}(x)=\Phi^3(x)$ from the semi-group property of Chebyshev polynomial. 
By Taylor's formula, 
\begin{multline}\label{eq:phi2}
\Phi^2(s)=\Phi(s+q_s3^{v_s+e})=\Phi(s)+q_s3^{v_s+e}\Phi'(s)\\
+\sum_{j=2}^{\deg(\Phi)}\frac{\Phi^{(j)}(s)(q_s3^{v_s+e})^j}{j!}. 
\end{multline}
Note that $\frac{\Phi^{(j)}(s)}{j!}$ is an integer from $\Phi(x)$ is an integer coefficient polynomial, and
$(v_s+e)j\geq v_s+e+2$ for $j\geq2$.
It yields from \eqref{eq:phi2} that
$\Phi^2(s)\equiv s+q_s(1+\Phi'(s))\pmod{3^{{v_s}+e+2}}$.
Following the same procedure as in \eqref{eq:phi2}, one can obtain
\begin{equation}\label{eq:phi3}
\begin{aligned}
\Phi^3(s)&=\Phi(\Phi^2(s))\\
&\equiv s+q_s3^{v_s+e}(1+\Phi'(s)+(\Phi'(s))^2)\pmod{3^{v_s+e+2}}.  
\end{aligned}
\end{equation}

When $s\equiv\pm1\pmod3$, from \eqref{eq:T'n:+-1}, one has $T'_{n^{l_s3^e}}(s)\equiv n^{2l_s3^e}\pmod3$.
Then one further gets $T'_{n^{l_s3^e}}(s)\equiv 1\pmod3$ from $n^2\equiv1\pmod3$.
When $s\equiv0\pmod3$, $T'_{n^{l_s3^e}}(s)\equiv T'_{n^{l_s3^e}}(0) \equiv (-1)^{\frac{n^{l_s3^e}-1}{2}}n^{l_s3^e}\pmod3$.
If $l_s=1$, $(-1)^{\frac{n^{3^e}-1}{2}}n^{3^e}\equiv1\pmod3$ for $n\equiv\pm1\pmod3$.
If $l_s=2$, from $n^2\equiv1\pmod3$, one has $(-1)^{\frac{n^{2\cdot3^e}-1}{2}}n^{2\cdot3^e}\equiv1\pmod3$.
Hence, $\Phi'(s)\equiv1\pmod3$. It means $1+\Phi'(s)+(\Phi'(s))^2\equiv3\pmod9$ in \eqref{eq:phi3}.
Then, one can obtain $\Phi^3(s)\equiv s+q_s3^{v_s+e+1}\pmod{3^{v_s+e+2}}$ and \eqref{eq:intro:p2:p3} holds for $t=e+1$.
Finally, setting $s=k-v_s$ in \eqref{eq:intro:p2:p3}, then one can get the least period of the sequence $S_{3^k}(s;n)$ is $l_s\cdot 3^{k-v_s}$.
\end{proof}

\section{The Graph Structure of Chebyshev Permutation Polynomials}

This Section analyzes the graph structure of Chebyshev permutation polynomials over two distinct modular rings: Sec.~\ref{Sec:Structure:z2k} examines \(\mathbb{Z}_{2^k}\) via the parity of states, and Sec.~\ref{Sec:Structure:z3k} studies \(\mathbb{Z}_{3^k}\) through the perspective of residue classes modulo \(3\).
Let $\mathcal{G}(T_n/\mathbb{Z}_{p^k})$ denote the functional graph of the $n$-th order Chebyshev permutation polynomial over ring $\mathbb{Z}_{p^k}$, and $Cyc(L, \bullet)$ as a cycle of length $L$ in a graph.
Then, the expression $k \times Cyc(L, \bullet)$ represents the configuration of $k$ distinct components, each possessing a cycle of length $L$.
In addition, the direct operator $\bigoplus$ denotes the disjoint union of these heterogeneous topological structures within the graph.
For instance, as illustrated in Fig.~\ref{fig:SMNcat} c), the graph $\mathcal{G}(T_{19}/\mathbb{Z}_{2^5}) = \bigoplus_{i=1}^2 2 \times Cyc(2^i, \bullet)\bigoplus 20 \times Cyc(1, \bullet)$ signifies that the functional graph of the $19$-th order Chebyshev polynomial over the ring $\mathbb{Z}_{2^5}$ is composed of $20$ self-loops, two cycles of length two, and two cycles  of length four.

\subsection{The Graph Structure of Chebyshev Permutation Polynomials over Ring $\mathbb{Z}_{2^k}$}
\label{Sec:Structure:z2k}
Chebyshev permutation polynomials over the ring $\mathbb{Z}_{2^k}$ exhibit significant algebraic regularities throughout the iteration process, where the induced functional graph admits a natural decomposition into independent components with distinct topological features based on the parity of the states. 
Building upon the inherent properties of these polynomials modulo $2^k$, Proposition \ref{pro:struct:p2:odd} provides a rigorous characterization of the subgraphs constituted by odd states, specifying the exact distribution of cycle lengths and their respective multiplicities. Furthermore, Proposition \ref{pro:struct:p2:even} details the state-transition trajectories and cycle architectures that originate from even states.

\begin{proposition}\label{pro:struct:p2:odd}
As for any Chebyshev polynomial, all odd states consist of
\[
\mathcal{G}(T_n/\Zt{k}) = \bigoplus_{t=1}^{k-w-3}2^{w+1} \times Cyc(2^t, \bullet) \oplus 2^{w+2} \times Cyc(1, \bullet),
\]
if $k\geq w+3$; $\mathcal{G} =2^k \times Cyc(1, \bullet)$ otherwise.
Specially, the states of $2^{w+1}$ cycles $Cyc(2^t, \bullet)$ are
\[
 \{2^{k-w-t-1}\pm1+2^{k-w-t}j_1+2^{k-t}j_2|j_2\in A_t\},
\]
where $j_1\in A_{w}$ and $A_t=\{0, 1, \cdots, 2^t-1\}$ and
$2^{w+2}$ self-loops are $\{1+2^{k-w-1}j, (j+1)2^{k-w-1}-1|j\in A_{w+1}\}$.
\end{proposition}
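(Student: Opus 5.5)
The plan is to compute the exact $2$-adic valuation of $T_n(x)-x$ for every odd $x$, read off the cycle length from Theorem~\ref{theorem:p2:explan}, and then count states. Since $T_n(1)=1$ and $T_n(-1)=-1$ ($n$ odd), both $x-1$ and $x+1$ divide $T_n(x)-x$ in $\mathbb{Z}[x]$. So I would write
\[
T_n(x)-x=(x^2-1)Q(x),\qquad Q\in\mathbb{Z}[x],
\]
and compute $Q(\pm1)=\frac{T_n'(\pm1)-1}{\pm2}=\pm\frac{n^2-1}{2}$ using $T_n'(\pm1)=n^2$. For odd $n$, one of $n\pm1$ has $2$-adic valuation exactly $1$, so $\bnu_2(n^2-1)=w+1$ and $\bnu_2(Q(\pm1))=w$.

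\textbf{Key step and main obstacle.} I would show that $\bnu_2(Q(x))=w$ for every odd $x$. Write $x=\varepsilon+h$ with $\varepsilon=\pm1$ and $4\mid h$; every odd $x$ has this form for a suitable sign. Using the Taylor expansion \eqref{eq:TaylorExplan} of $T_n$ at $\varepsilon$,
\[
T_n(x)-x=(n^2-1)h+\sum_{m\ge2}\frac{T_n^{(m)}(\varepsilon)}{m!}h^m ,
\]
and dividing by $x^2-1=h(2\varepsilon+h)$ reduces the claim to showing that every term with $m\ge2$ has $2$-adic valuation strictly larger than $\bnu_2\bigl((n^2-1)h\bigr)=w+1+\bnu_2(h)$. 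For this I would combine three bounds:
\begin{itemize}
\item Lemma~\ref{pro:2w:Tn'1} gives $\bnu_2\bigl(T_n^{(m)}(\varepsilon)\bigr)\ge w+\lfloor m/2\rfloor$;
\item $\bnu_2(m!)\le m-1$;
\item $\bnu_2(h^m)=m\,\bnu_2(h)$ with $\bnu_2(h)\ge2$.
\end{itemize}
Together these give $\bnu_2$ of the $m$-th term at least $w+\lfloor m/2\rfloor-(m-1)+m\,\bnu_2(h)$. This exceeds $w+1+\bnu_2(h)$ because $(m-1)\bigl(\bnu_2(h)-1\bigr)+\lfloor m/2\rfloor\ge2$ for $m\ge2$. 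Hence $\bnu_2\bigl(T_n(x)-x\bigr)=w+1+\bnu_2(h)$.

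\textbf{Cycle lengths.} If $x\equiv\varepsilon\pmod{2^r}$ exactly, with $r=\bnu_2(h)\ge2$, then $x$ satisfies \eqref{eq:begin:explan:p2} with $s=r+w+1$. Theorem~\ref{theorem:p2:explan} then gives least period $2^{k-r-w-1}$ when $r+w+1\le k$, and $x$ is fixed otherwise.

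\textbf{Counting.} The odd $x\in\Zt{k}$ with exact parameter $r$ (either sign) number $2\cdot2^{k-r-1}=2^{k-r}$. Setting $t=k-w-r-1$, these form $2^{k-r}/2^t=2^{w+1}$ cycles of length $2^t$, and $r\ge2$ forces $1\le t\le k-w-3$. The remaining odd states, namely $x\equiv\pm1\pmod{2^{k-w-1}}$, are fixed; there are $2\cdot2^{w+1}=2^{w+2}$ of them. If $k<w+3$, no $t\ge1$ occurs and all $2^{k-1}$ odd states are fixed points, which is the ``otherwise'' clause (reading its count as the number of odd states).

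\textbf{Explicit state lists.} Writing $r=k-w-t-1$, the odd $x$ with $x\equiv\pm1\pmod{2^r}$ but $x\not\equiv\pm1\pmod{2^{r+1}}$ are exactly $\pm1+2^{r}+2^{r+1}J$ with $J$ ranging over $\mathbb{Z}/2^{k-r-1}$. Splitting $J=j_1+2^{w}j_2$ then reproduces the displayed sets indexed by $j_1\in A_w$ and $j_2\in A_t$. Similarly, $x\equiv\pm1\pmod{2^{k-w-1}}$ reproduces the listed self-loops.
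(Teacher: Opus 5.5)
Your proof is correct, and it takes a genuinely different route from the paper's.

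\textbf{The paper's route.} The paper inducts on $k$. It settles $k\le w+3$ and $k=w+4$ by a fourth-order Taylor expansion of $T_n(1+2j)$ about $1$ modulo $2^{w+4}$. In the inductive step it only tracks which self-loops at level $w+s$ persist and which split into $2$-cycles at level $w+s+1$. The doubling of the already existing $2^{t-1}$-cycles is delegated to \cite[Theorem~2]{Yoshioka:chebypro:Nlinear2015}.

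\textbf{Your route.} You prove one closed formula, $\bnu_2(T_n(x)-x)=w+1+\bnu^*_2(x)$ for every odd $x$. You expand at the $\varepsilon\in\{\pm1\}$ with $x\equiv\varepsilon\pmod 4$, and Lemma~\ref{pro:2w:Tn'1} bounds all higher Taylor terms uniformly. Theorem~\ref{theorem:p2:explan} plus a count of residue classes then finishes everything at once.

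\textbf{What each buys.} The paper's induction displays the level-by-level evolution of the graph that the section emphasises. Your argument is non-inductive and uses only results proved in the paper. It also gives the period $2^{k-w-1-\bnu^*_2(x)}$ of each individual odd state, which is the odd case of Corollary~\ref{theorem:p2:period:express} directly.

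It also sidesteps a slip in the paper's explicit expansion. In \eqref{eq:Tn1+2j:w+4}, the quartic Taylor term actually vanishes modulo $2^{w+4}$, while the cubic term contributes $2^{w+3}j^3$. As a result, the paper's base case $k=w+4$ wrongly lists $j\equiv1\pmod 4$ as fixed and $j\equiv3\pmod 4$ as moving; for instance, $T_{19}(3)\equiv35\pmod{64}$. Your formula gives the correct fixed set $x\equiv\pm1\pmod 8$, which matches the statement.

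\textbf{Two small points.}
\begin{itemize}
\item Your splitting only shows that the union of the displayed sets is the set of points of period $2^t$. To see that each set with fixed sign and fixed $j_1$ is a single cycle, note that $\bnu_2(T_n(x)-x)=k-t$ forces $T_n$ to preserve $x\bmod 2^{k-t}$. So the corresponding fiber of size $2^t$ is invariant and must equal the orbit.
\item You are right that the count in the ``otherwise'' clause should read $2^{k-1}$, not $2^k$.
\end{itemize}
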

\begin{proof}
For any odd state, it can be expressed as $1+2j$, where $j$ is an integer. 
By Taylor's formula and $T_n(1)=1$, one has
\begin{equation}\label{eq:1+2j}
T_n(1+2j)=1+\sum^n_{m=1}\frac{(2j)^m}{m!}\cdot T^{(m)}_n(1).
\end{equation}
Note that 
\(
\nu_2(\frac{2^m}{m!}) = \nu_2(2^m) -\nu_2(m!)= s_2(m)\ge 1
\)
from Legendre's formula.
So, one has $s_2(m)+\lfloor\frac{m}{2}\rfloor\ge 4$ when $m\ge 5$, yielding
$\nu_2(\frac{(2j)^m}{m!}\cdot T^{(m)}_n(1))\ge w+4$ from \eqref{eq:2wmid:T'n}.
It yields from \eqref{eq:1+2j} that 
\(T_n(1+2j)\equiv 1+\sum_{m=1}^{4}\frac{(2j)^m}{m!}T_n^{(m)}(1) \pmod{2^{w+4}}.
\)
By incorporating \eqref{eq:T'n:+-1} into the previous congruence
\(T_n(1+2j) 
\equiv 1+2j n^2 +2j^2\tfrac{n^2(n^2-1)}{3}
+(j^3+j^4)2^{w+3} \pmod{2^{w+4}}
\equiv 1+2j +2(n^2-1)(j+3^{-1}n^2j^2)\pmod{2^{w+4}}
\)
Note that $n^2\equiv1\pmod{2^{w+1}}$ from \eqref{eq:v2(n^2-1)} and $3^{-1}\equiv 3 \pmod 4$, one obtains
\begin{equation}\label{eq:Tn1+2j:w+4}
T_n(1+2j)\equiv 1+2j +j(1+3j)2^{w+2}\pmod{2^{w+4}}.
\end{equation}
Hence, 
\begin{equation*}
\begin{aligned}
T_n(1+2j)
&\equiv 1+2j+(j+j^2)2^{w+2}\pmod {2^{w+3}}\\
&\equiv 1+2j\pmod {2^{w+3}}.
\end{aligned}
\end{equation*}
Consequently, when $k \leq w + 3$, every odd state corresponds to a self-loop, and thus the graph can be represented as $\mathcal{G} = 2^{k} \times \mathrm{Cyc}(1, \bullet)$.
This result verifies that the proposition holds for all $k \leq w + 3$.

When $k = w+4$, referring to \eqref{eq:Tn1+2j:w+4}, one can get
$$
T_n(1+2j)\equiv
\begin{cases}
    1+2j &\mbox{if~} j\bmod 4 \in \{0, 1\};\\
    1+2j+2^{w+3} &\mbox{if~} j\bmod 4 \in \{2, 3\}.
\end{cases}
$$
This means that for all self-loop states $\{1+2j\mid j\in A_{w+2}\}$ in $G(T_n/\Zt{w+3})$, half of the states with $j\bmod4 \in\{0, 1\}$ remain self-loop,
and half of the states with $j\bmod4 \in\{2,3\}$ have their period changed in graph $G(T_n/\Zt{w+4})$.
Thus, this proposition holds for $k=w+4$.

Assume this proposition is true for $k =w+s$ with $s\geq4$, namely
\[
\mathcal{G}(T_n/\Zt{w+s})= \bigoplus_{t=1}^{s-3}2^{w+1} \times Cyc(2^t, \bullet) \oplus 2^{w+2} \times Cyc(1, \bullet).
\]
where $2^{w+2}$ self-loop cycles are 
$\{1 + 2^{s-1}j, 2^{s-1}-1 + 2^{s-1}\mid j\in A_{w+1}\}$, which yields that
$T_n(1+2^{s-1}j)\equiv 1+ 2^{s-1}j\bmod 2^{w+s}$ and 
$T_n(2^{s-1}-1+2^{s-1}j)\equiv 2^{s-1}-1+ 2^{s-1}j\bmod 2^{w+s}$.
From Lemma~\ref{pro:2w:Tn'1}, one can get
\begin{equation*}
\begin{split}
&T_n(1+2^{s-1}j)\\
&\equiv T_n(1) +2^{s-1}j\cdot T'_n(1)\\
&\equiv 1+2^{s-1}j+2^{w+s}j\\
&\equiv
\begin{cases}
1+2^{s-1}j \pmod{2^{w+s+1}} & \mbox{if~} j\equiv0 \bmod2; \\
1+2^{s-1}j+2^{w+s} \pmod{2^{w+s+1}} & \mbox{if~} j\equiv1 \bmod2,
\end{cases}
\end{split}
\end{equation*}
and 
\begin{equation*}
\begin{split}
&T_n(-1+2^{s-1} +2^{s-1}j)\\
&\equiv T_n(-1) +2^{s-1}(j+1)\cdot T'_n(1)\\
&\equiv -1+2^{s-1}(j+1)+2^{w+s}(j+1)\pmod 2^{w+s+1}
\end{split}
\end{equation*}
where $j\in A_{w+1}$.
Accordingly, depending on the parity of $j$, one obtains 
$T_n(-1+2^{s-1}+2^{s-1}j)\equiv -1+2^{s-1}+2^{s-1}j \pmod{2^{w+s+1}}$ 
when $j\equiv1\pmod2$, and 
$T_n(-1+2^{s-1}+2^{s-1}j)\equiv -1+2^{s-1}+2^{s-1}j+2^{w+s} \pmod{2^{w+s+1}}$ 
when $j\equiv0\pmod2$.
Hence, $T_{n}(1+ 2^{s}j)\equiv T_n(1+2^{s-1}j++2^{s-1}j) \equiv 1+2^{s}j \bmod 2^{w+s+1}$  and  $T_{n}(-1+2^{s} +2^{s}j)\equiv T_{n}(-1+2^{s-1} +2^{s-1}j+2^{s-1} +2^{s-1}j) \equiv -1+2^{s} +2^{s}j \bmod 2^{w+s+1}$.
and 
$T_n^2(1+2^s(2j_1+1))\equiv T_n(1+2^s(2j_1+1)+ 2^{w+s})\equiv 1+2^s(2j_1+1)\bmod 2^{w+s+1}$ with $j_1\in A_{w}$,
those are $2^{w}$ cycles with
$Cyc(2, \bullet) = \{1+2^{s-1}+2^sj_1 +2^{w+s}j_2 | j_2\in A_1\}$.
Similarly, one has $T_n^2(-1+2^{s-1}+2^sj_1))\equiv T_n(-1+2^{s-1}+2^sj_1+ 2^{w+s})\equiv -1+2^{s-1}+2^sj_1\bmod 2^{w+s+1}$,
those are $2^{w}$ cycles
$Cyc(2, \bullet) = \{-1+2^{s-1}+2^sj_1 +2^{w+s}j_2 | j_2\in A_1\}$.

Referring to \cite[Theorem 2]{Yoshioka:chebypro:Nlinear2015}, one can know cycle
$$Cyc(2^{t-1}, \bullet) =
 \{2^{s-t}\pm1+2^{s-t+1}j_1+2^{w+s-t+1}j_2|j_2\in A_{t-1}\},
$$ in $G(T_n/\Zt{w+s})$ changes cycle
$$Cyc(2^t, \bullet) =
 \{2^{s-t}\pm1+2^{s+1-t}j_1+2^{w+s+1-t}j_2|j_2\in A_t\}$$ in $G(T_n/\Zt{w+s+1})$ as the precision $k$ increases by one.
Therefore, this Property also holds for $k=w+s+1$.
\end{proof}

\begin{proposition}\label{pro:struct:p2:even}
As for any Chebyshev polynomial, all even states consist of
\[
\mathcal{G}(T_n/\Zt{k})= \bigoplus_{t=1}^{k-w-1}2^{w-1} \times Cyc(2^t, \bullet) \oplus 2^{w} \times Cyc(1, \bullet),
\]
if $k > w+1$; $\mathcal{G}(T_n/\Zt{k})=2^k\times Cyc(1, \bullet)$ otherwise.
Specifically, the \(2^{w-1}\) \(Cyc(2^t, \bullet)\) are generated from initial values \(x_0\) in the set
\[
\{2^{k-w-t}(2j_1+1) + 2^{k-t} j_2 \mid j_2 \in A_t\},
\]
where \(j_1 \in A_{w-1}\) and  \(A_i = \{0, 1, \cdots, 2^i-1\}\). Moreover, the \(2^w\) self-loops are generated from initial values \(x_0\) in the set \(\{2^{k-w} j \mid j \in A_w\}\).
\end{proposition}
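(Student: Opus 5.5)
The plan is to reduce everything to one exact valuation identity for even inputs, and then read off the cycle structure from Theorem~\ref{theorem:p2:explan}. Write an even state as $x=2^u r$ with $r$ odd and $u\ge 1$; the state $x=0$, and more generally any $x$ with $2^{k-w}\mid x$, is handled separately. The key claim is
\[
\bnu_2\bigl(T_n(x)-x\bigr)=w+u \qquad\text{whenever } \bnu_2(x)=u\ge 1 .
\]
To prove it, use that $T_n$ is an odd polynomial when $n$ is odd, so
\[
T_n(x)-x=(a_1-1)x+\sum_{i\ge 3,\ i \text{ odd}} a_i x^i .
\]

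\emph{Linear term.} From \eqref{eq:cheby:coeff} we have $a_1=(-1)^{\frac{n-1}{2}}n$. If $n\equiv 1\pmod 4$, then $a_1-1=n-1$ and $w=\bnu_2(n-1)$. If $n\equiv 3\pmod 4$, then $a_1-1=-(n+1)$ and $w=\bnu_2(n+1)$. In both cases $\bnu_2(a_1-1)=w$ exactly, so $\bnu_2\bigl((a_1-1)x\bigr)=w+u$. (Lemma~\ref{pro:2w:ak} only gives $2^w\mid a_1-1$, so this exactness must be checked directly as above.)

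\emph{Higher terms.} For $i\ge 3$, Lemma~\ref{pro:2w:ak} gives $\bnu_2(a_i)\ge w$. Hence
\[
\bnu_2(a_ix^i)\ge w+iu\ge w+3u>w+u .
\]
Therefore the linear term alone determines the valuation, which proves the claim.

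\emph{Orbit structure.} Fix $u$ with $1\le u<k-w$. By the claim, the hypothesis \eqref{eq:begin:explan:p2} of Theorem~\ref{theorem:p2:explan} holds with $s=w+u\le k$, so the least period of $x$ is $2^{k-w-u}$. Set $t=k-w-u$, which ranges over $1,\dots,k-w-1$. Since $T_n(x)\equiv x\pmod{2^{w+u}}$ and the valuation is preserved along the orbit, the orbit of $x$ lies in
\[
\{2^u(2j_1+1)+2^{w+u}j_2 \mid j_2\in A_t\},
\]
where $2j_1+1$ is the fixed odd residue of $r$ modulo $2^w$, that is, $j_1\in A_{w-1}$. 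This set has exactly $2^t$ elements, equal to the period, so it is precisely one cycle. Substituting $u=k-w-t$ recovers the stated description. Counting gives $2^{w-1}$ cycles of length $2^t$ for each such $t$, and these account for all $2^{k-u-1}$ elements of valuation $u$.

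\emph{Fixed points.} If $2^{k-w}\mid x$ (including $x=0$), then $T_n(x)-x$ is divisible by $2^{w+\bnu_2(x)}$, which is at least $2^k$ (for $x=0$ it is $0$ since $T_n(0)=0$). So these $2^w$ states $\{2^{k-w}j\mid j\in A_w\}$ are self-loops.

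\emph{Small $k$.} When $k\le w+1$, every even $x$ has $w+\bnu_2(x)\ge w+1\ge k$, so every even state is a self-loop, giving the degenerate case of the statement. The count of even states there should read $2^{k-1}$ self-loops; I would note this against the stated $2^k$.

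The main obstacle is securing the exact equality $\bnu_2(a_1-1)=w$ and the strict inequality for all higher coefficients uniformly. Lemma~\ref{pro:2w:ak} supplies only the lower bound on the higher coefficients. The exact linear valuation has to be recovered from the sign $(-1)^{\frac{n-1}{2}}$ matching which of $n\pm1$ carries the valuation $w$, and I would verify that case split first.
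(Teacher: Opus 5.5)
Your proof is correct, and it takes a genuinely different route from the paper's.

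The paper argues by induction on $k$. It first shows that all even states are fixed modulo $2^{w+1}$. It then treats $k=w+2$ explicitly via $T_n(2j)\equiv 2j+2^{w+1}j \pmod{2^{w+2}}$. In the inductive step it splits the current self-loops $\{2^s j\}$ into fixed points ($j$ even) and new $2$-cycles ($j$ odd). For the fact that every existing $2^{t-1}$-cycle becomes a $2^t$-cycle when the precision grows by one, it cites an external result \cite{Yoshioka:chebypro:Nlinear2015}.

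You instead prove one pointwise identity, $\bnu_2(T_n(x)-x)=w+\bnu_2(x)$ for every even $x\neq 0$. This is exactly the hypothesis of Theorem~\ref{theorem:p2:explan} with $s=w+\bnu_2(x)$, so the period $2^{k-w-\bnu_2(x)}$ of every even state comes out at once. You then pin down the cycles as residue classes modulo $2^{w+u}$ by valuation invariance and a cardinality count.

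Both arguments hinge on the exactness $\bnu_2(a_1-1)=w$. The paper uses it only implicitly, when it writes $a_1\equiv 1+2^w \pmod{2^{w+1}}$ in its $k=w+2$ and inductive steps. Your check, matching the sign in $a_1=(-1)^{(n-1)/2}n$ to whichever of $n\mp 1$ carries valuation $w$, is the right way to secure it.

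\textbf{What your approach buys.}
\begin{itemize}
\item It is self-contained within the paper: only Lemma~\ref{pro:2w:ak} and Theorem~\ref{theorem:p2:explan} are needed, and the cycle-doubling step the paper imports is really just Theorem~\ref{theorem:p2:explan} again.
\item It identifies the exact cycle sets rigorously, whereas the paper's description of the lifted cycles in its inductive step is loose.
\item It gives the even case of Corollary~\ref{theorem:p2:period:express} immediately.
\end{itemize}

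\textbf{What the paper's approach buys.} Its induction exhibits the ``self-loops split, existing cycles double'' branching as $k$ increases, which is the dynamical picture the paper emphasizes.

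Your remark on the degenerate case is also right. There are only $2^{k-1}$ even states in $\mathbb{Z}_{2^k}$, so for $k\le w+1$ the count should be $2^{k-1}$ self-loops, not $2^k$.
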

\begin{proof}
Note that $T_n(2j)= a_1\cdot 2j+a_3(2j)^3+\cdots+a_n(2j)^n$.
It yields from Lemma~\ref{pro:2w:ak} that 
\begin{equation}\label{eq:2j:w+1}
T_n(2j)\equiv 2j\bmod 2^{w+1} 
\end{equation}
for all $j \in\{0, 1, \cdots, 2^w-1\}$. 
Thus, all even states form self-loops when $k \leq w+1$, which is represented as $\mathcal{G}(T_n/\Zt{k}) = 2^{k} \times \mathrm{Cyc}(1, \bullet)$.

When $k=w+2$, for any $j\in A_{w+1}$, Lemma~\ref{pro:2w:ak} implies
$T_n(2j)\equiv 2j(2^w\pm1+h\cdot2^{w+1})\bmod 2^{w+2}\equiv 2j+ 2^{w+1}j\bmod 2^{w+2}$.
Hence,
\begin{equation*}
T_n(2j)\equiv 
\begin{cases}
2j \bmod 2^{w+2} & \mbox{when~} j\equiv0 \bmod 2; \\
2j+2^{w+1} \bmod 2^{w+2} & \mbox{when~} j\equiv1 \bmod 2.
\end{cases}
\end{equation*}
Applying $T_n$ once again yields
\begin{equation*}
\begin{multlined}
T_n^2(2j) \equiv \\
\begin{cases}
T_n(2j) \pmod{2^{w+2}}           & \text{when } j \equiv 0 \pmod{2}; \\
T_n(2j + 2^{w+1}) \pmod{2^{w+2}} & \text{when } j \equiv 1 \pmod{2}.
\end{cases}
\end{multlined}
\end{equation*}
In both cases, $T_n^2(2j)\equiv 2j\pmod{2^{w+2}}$.
The two congruences above imply that all states of the form $x = 2j$ are self-loops, where $j \in A_{w+1}$ and $j \equiv 0 \pmod{2}$, they form $2^w$ self-loops, that is $2^w \times \mathrm{Cyc}(1, \bullet)$. Meanwhile, all states of the form $x = 2j$ constitute $\frac{2^{w+1} - 2^w}{2} = 2^{w-1}$ cycles of length $2$, where $j \in A_{w+1}$ and $j \equiv 1 \pmod{2}$, expressed as $2^{w-1} \times \mathrm{Cyc}(2, \bullet)$. This confirms that the proposition holds for $k = w + 2$.
Assume the Proposition holds for any $k =w+s$, namely
\[
\mathcal{G}(T_n/\Zt{w+s}) = \bigoplus_{t=1}^{s-1}2^{w-1} \times Cyc(2^t, \bullet) \oplus 2^{w} \times Cyc(1, \bullet),
\]
where $2^w$ self-loop cycles are $\{2^{s}j|j\in A_w\}$, which yields that
$T_n(2^sj)\equiv 2^sj\bmod 2^{w+s}$.
Similar to \eqref{eq:2j:w+1}, one has
\begin{equation*}
\begin{split}
T_n(2^sj)& \equiv 2^sj (2^w\pm1+h\cdot2^{w+1})\bmod 2^{w+s+1}\\
&\equiv 2^sj+ 2^{w+s}j\bmod 2^{w+s+1}\\
&\equiv
\begin{cases}
 2^sj \bmod 2^{w+s+1} & \mbox{if~} j\equiv0 \bmod2; \\
 2^sj+ 2^{w+s} \bmod 2^{w+s+1} & \mbox{if~} j\equiv1 \bmod2.
\end{cases}
\end{split}
\end{equation*}
It further follows from the above equation that  
\[
T_n^2(2^s j) \equiv 2^s j \pmod{2^{w+s+1}}.
\]
This implies that the states $x \in \{2, 4, 6 ,\ldots, 2^w-2\}$ are self-loops, 
while the states $x \in \{1, 3, 5, \ldots, 2^w-1\}$ form cycles of length two. 
Referring to \cite[Theorem~2]{Yoshioka:chebypro:Nlinear2015}, we know that the cycle
\begin{equation*}
Cyc(2^{t-1}, \bullet)
= \{2^{s-t+1}j_1 + 2^{w+s-t+1}j_2 
\mid j_1, j_2 \in A_t\}
\end{equation*}
in $G(T_n/\mathbb{Z}_{2^{w+s}})$ transforms into the cycle
\begin{equation*}
Cyc(2^{t}, \bullet) 
= \{2^{s+1-t}j_1 + 2^{w+s+1-t}j_2 
\mid j_1, j_2 \in A_t\}
\end{equation*}
in $\mathcal{G}(T_n/\mathbb{Z}_{2^{w+s+1}})$ as the precision $k$ increases by one, where $j_1 \equiv 1 \pmod{2},
j_2 \equiv 0 \pmod{2}$.
Therefore, this property also holds for $k = w + s + 1$.
\end{proof}

Figure \ref{fig:SMNcat} illustrates the evolution of the functional graph of the Chebyshev polynomial of degree $19$ over the ring $\mathbb{Z}_{2^k}$ as $k$ continuously increases.
Combining Propositions \ref{pro:struct:p2:odd} and \ref{pro:struct:p2:even}, it can be calculated that $w=2$, then one has all odd states consist of
\[
\mathcal{G}(T_n/\Zt{7}) = \bigoplus_{t=1}^{2}8 \times Cyc(2^t, \bullet) \bigoplus 16 \times Cyc(1, \bullet);
\]
all even states consist of
\[
\mathcal{G}(T_n/\Zt{7}) = \bigoplus_{t=1}^{4}2 \times Cyc(2^t, \bullet) \bigoplus 4 \times Cyc(1, \bullet).
\]
Therefore, one obtains the functional graph
\begin{multline*}
\mathcal{G}(T_n/\mathbb{Z}_{2^7}) = \bigoplus^4_{t=3} 2\times Cyc(2^t, \bullet) \bigoplus^2_{t=1} 10\times Cyc(2^t, \bullet) \\ \bigoplus 20\times Cyc(1, \bullet),
\end{multline*}
The structure exhibited in Figure \ref{fig:SMNcat} e) is consistent with the deduced structure.

\renewcommand\arraystretch{1.2}
\setlength\tabcolsep{4pt} 
\addtolength{\abovecaptionskip}{-2pt}
\begin{figure*}[!htb]
	\centering
	\begin{minipage}[t]{0.4\twofigwidth}
	\centering
	\includegraphics[width=0.56\twofigwidth]{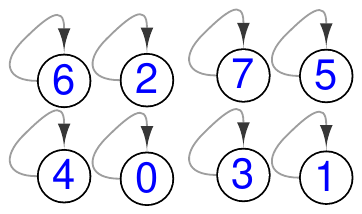}
	a)
	\end{minipage}\hspace{6em}
	\begin{minipage}[t]{0.5\twofigwidth}
		\centering
		\includegraphics[width=0.5\twofigwidth]{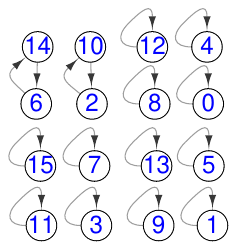}
		b)
	\end{minipage} \hspace{6em}
	\begin{minipage}[t]{\twofigwidth}
		\centering
		\includegraphics[width=\twofigwidth]{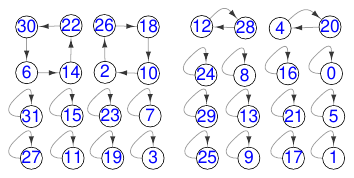}
		c)
	\end{minipage}\vspace{0.2em}
	\begin{minipage}{0.8\BigOneImW}
		\centering
		\includegraphics[width=0.8\BigOneImW]{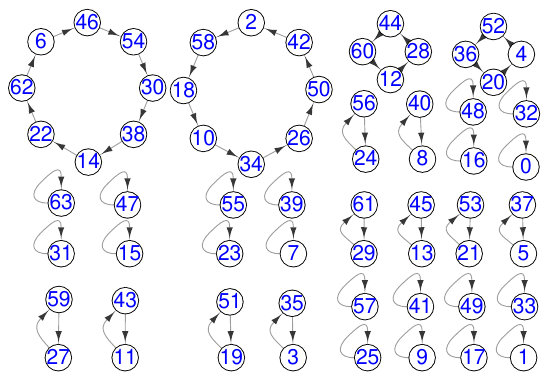}
		d)
	\end{minipage}
	\begin{minipage}{0.95\BigOneImW}
		\centering
		\includegraphics[width=0.95\BigOneImW]{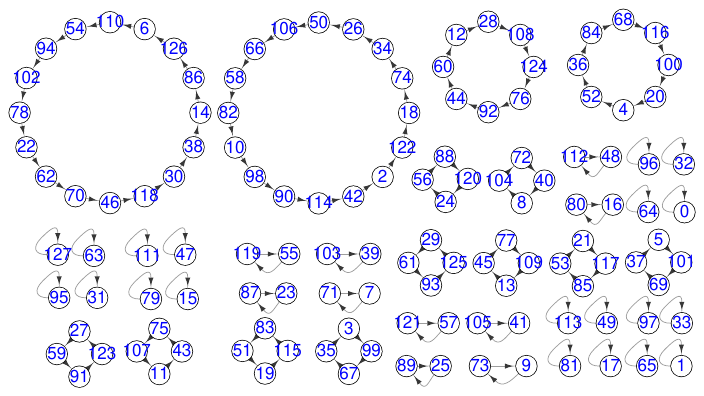}
		e)
	\end{minipage}
	\caption{Functional graphs of Chebyshev polynomial  $(n, p)=(19, 2)$:
	a) $e=3$; b) $e=4$; c) $e=5$; d) $e=6$; c) $e=7$.}
\label{fig:SMNcat}
\end{figure*}

\begin{corollary}\label{theorem:p2:period:express}
Let $L(n, 2, k, x)$ denote the period of the $n$-th order Chebyshev permutation polynomial at $x\in\Zt{k}$. Then
\begin{equation*}
L(n, 2, k, x)=
\begin{cases}
2^{k'-\bnu_2(x)}     & \text{when }x \equiv 0\!\!\! \pmod{2}, k'>1,\\[2pt]
2^{k'-\bnu^*_2(x)-1} & \text{when }x \equiv 1\!\!\! \pmod{2}, k'>3,\\[2pt]
1                    & \text{otherwise},
\end{cases}
\end{equation*}
where $k' = k -\bnu^*_2(n)$, $\bnu_2(x)$ denotes the $2$-adic valuation of $x$, and
\[
\bnu^*_2(x) = \max\{\bnu_2(x-1), \bnu_2(x+1)\}.
\]
\end{corollary}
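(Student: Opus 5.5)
The corollary follows by reading off, for each state $x$, the cycle that contains it in Proposition~\ref{pro:struct:p2:even} (even $x$) and Proposition~\ref{pro:struct:p2:odd} (odd $x$). Since $T_n$ is a permutation, the least period $L(n,2,k,x)$ is exactly the length of that cycle. Throughout, write $w=\bnu^*_2(n)$, so that $k'=k-w$.

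\emph{Even case.} By Proposition~\ref{pro:struct:p2:even}, if $k\le w+1$, i.e.\ $k'\le 1$, every even state is a self-loop, which gives the ``otherwise'' value $1$. If $k'>1$, the self-loops are exactly the states $2^{k-w}j$ with $j\in A_w$, that is, the even $x$ with $\bnu_2(x)\ge k-w=k'$ (including $x=0$). For these, $2^{k'-\bnu_2(x)}$ must be read as $1$, so the formula should be understood with $\min\{\bnu_2(x),k'\}$. The cycles of length $2^t$ consist of the states
\[
2^{k-w-t}(2j_1+1)+2^{k-t}j_2 .
\]
Because $w\ge 1$, the term $2^{k-t}j_2$ is divisible by $2^{k-w-t+1}$, so these states have $\bnu_2(x)=k'-t$ exactly. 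Hence $t=k'-\bnu_2(x)$, which gives the first line. As a consistency check, the sets for $t=1,\dots,k'-1$, together with the self-loops, exhaust the $2^{k-1}$ even residues: $2^w+\sum_t 2^{w-1}2^t$ matches the count.

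\emph{Odd case.} By Proposition~\ref{pro:struct:p2:odd}, if $k\le w+3$, i.e.\ $k'\le 3$, all odd states are fixed. Otherwise, the cycles of length $2^t$ consist of the states
\[
2^{k-w-t-1}\pm1+2^{k-w-t}j_1+2^{k-t}j_2 ,
\]
so $x\mp1$ equals $2^{k'-t-1}$ times an odd number. This gives $\bnu_2(x\mp1)=k'-t-1$. For the opposite sign, $\bnu_2(x\pm1)=1$, and this is no larger, since $k'-t-1\ge 1$ whenever $t\le k'-3$. Therefore $\bnu^*_2(x)=k'-t-1$, i.e.\ $t=k'-\bnu^*_2(x)-1$. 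The self-loops $1+2^{k-w-1}j$ and $-1+2^{k-w-1}(j+1)$ have $\bnu^*_2(x)\ge k'-1$, where the formula gives $2^0=1$ (again read with truncation). A count, $2^{w+2}+\sum_{t=1}^{k'-3}2^{w+1}2^t=2^{k-1}$, confirms that every odd state is accounted for.

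\emph{Main obstacle.} The real difficulty is bookkeeping rather than new mathematics. First, one must verify that the valuation computed from the parametrization is exact and never ambiguous, in particular the boundary cases $t=k'-3$ and $\bnu^*_2$ near $1$. Second, one must handle the degenerate states $x=0$ and $x=\pm1$, where $\bnu_2$ or $\bnu^*_2$ is infinite modulo $2^k$. I would treat these by interpreting the valuations as capped at $k$ (equivalently, stating that the exponent is $\max\{0,\cdot\}$). As an independent sanity check, one can recover the same exponents from Theorem~\ref{theorem:p2:explan} using $s=\bnu_2(T_n(x)-x)$. The odd-case expansion \eqref{eq:Tn1+2j:w+4} gives $s=w+1+\bnu^*_2(x)$, consistent with the period $2^{k-s}=2^{k'-\bnu^*_2(x)-1}$.
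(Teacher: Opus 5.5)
Your proposal is correct and follows the same route as the paper, whose proof simply reads the periods off Propositions~\ref{pro:struct:p2:odd} and~\ref{pro:struct:p2:even}. Your valuation bookkeeping supplies the details the paper leaves implicit, including the remark that the exponent must be read as $\max\{0,\cdot\}$ for states with $\bnu_2(x)>k'$ or $\bnu^*_2(x)>k'-1$ (e.g.\ $x=0,\pm1$).

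The only flaw is in your side remark. \eqref{eq:Tn1+2j:w+4} holds only modulo $2^{w+4}$, and as printed it even predicts $s\ge w+4$ for $x\equiv 3\pmod 8$, whereas the true value is $s=w+3$ (e.g.\ $T_5(3)-3=2^5\cdot 105$). So that equation cannot certify $s=w+1+\bnu^*_2(x)$. The identity is nevertheless true: it follows from $T_n(x)-x=2(x^2-1)U_{(n-1)/2}(x)U_{(n-3)/2}(x)$, where $U_m$ are the second-kind Chebyshev polynomials, together with $\bnu_2(U_m(x))=\bnu_2(m+1)$ for odd $x$.
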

\begin{proof}
Based on the functional graph structure Propositions~\ref{pro:struct:p2:odd} and~\ref{pro:struct:p2:even} concerning odd and even states, we can derive the result of this corollary.
\end{proof}


\subsection{The Graph Structure of Chebyshev Permutation Polynomials over Ring $\mathbb{Z}_{3^k}$}\label{Sec:Structure:z3k}

Based on the distribution of initial states among modulo 3 residue classes, as described in Theorem \ref{Theorem:periodValue}, the evolution paths and dynamical characteristics of various vertices within the functional graph are thoroughly examined in this section.

Proposition \ref{lemma:Rk:self-loop} further precisely locates the distribution of self-loops within the functional graph. 
Combining the period distribution with the topological features of self-loops, Proposition \ref{pro:SMN:p3:0} comprehensively characterizes the functional graph for the case $x \equiv 0$, detailing the lengths and corresponding cycles. 
Analogously, Proposition \ref{pro:SMN:p3:pm1} details the functional graph for the case of $x \equiv \pm 1$. 
Due to the methodological similarity in the derivations, the proof of Proposition \ref{pro:SMN:p3:pm1} is omitted.

\begin{figure*}[!htb]
    \centering
	\begin{minipage}[t]{2.3\twofigwidth}
	\centering
	\includegraphics[width=2.3\twofigwidth]{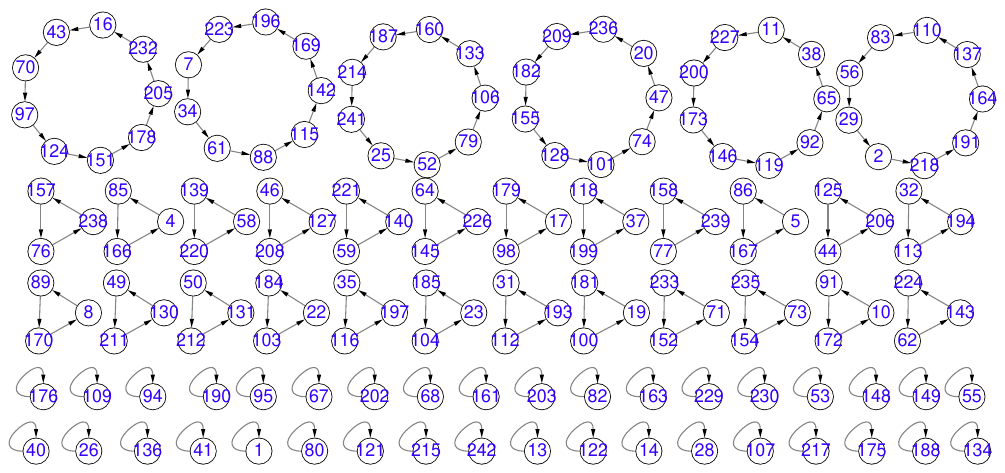}
	a)
	\end{minipage}
	\begin{minipage}[t]{1.3\twofigwidth}
		\centering
		\includegraphics[width=1.3\twofigwidth]{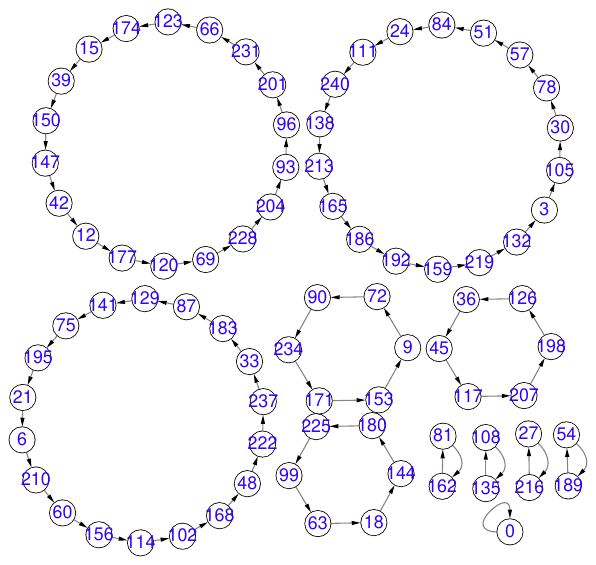}
		b)
	\end{minipage} \hspace{6em}
	\caption{Functional graphs of Chebyshev polynomial $(n, p, k)=(19, 3, 5)$ with all status satisfying: a) $x\equiv0\pmod3$ b) $x\equiv\pm1\pmod3$.}
\label{fig:cheby:Z3p:n19:k2345}
\end{figure*}

\begin{proposition}\label{lemma:Rk:self-loop}
Let $R_k$ be the set of values of self-loops in $\mathcal{G}_{3^k}$ whose values belong to $A_1$.
Then, one has
$R_k=A_1$ if $k\leq w+1$;
otherwise 
\[
R_k = \{x\in A_1 \vvert x\equiv (\pm 1 + \sum_{i=1}^{k-w-1} j_i 3^i) \pmod{3^{k-w}},\}
\]
where $A_1=\{x\vert x\in\mathbb{Z}_k, x\equiv \pm 1\pmod3\}$,
\begin{equation}\label{eq:j_i:condtion}
j_i\in 
\begin{cases}
\{0, 1\} & \mbox{when } i=1, \\
\{-q_i(1+j_1)^{-1}\bmod 3 \}  & \mbox{when } 2 \le i \leq k-w-1,
\end{cases}
\end{equation}
and
$q_i=\frac{T_n(\pm1+\sum^{i-1}_{t=1}3^tj_t)\mp1+\sum^{i-1}_{t=1}3^tj_t}{3^{w+i}}$.
\end{proposition}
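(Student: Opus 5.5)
We will prove the statement by induction on $k$, lifting self-loops one 3-adic digit at a time. The working tool is the Taylor expansion of $T_n$ about $\pm1$, controlled by Lemma~\ref{lemma:p3:Tn'1}. Throughout, we treat the sign $+1$; the case $-1$ is symmetric because $T_n(-x)=-T_n(x)$ for odd $n$.

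\emph{Base range.} For $x=\pm1+3h$ we expand
\[
T_n(\pm1+3h)=\pm1+3hT_n'(\pm1)+\tfrac{9h^2}{2}T_n''(\pm1)+\sum_{m\ge3}\tfrac{T_n^{(m)}(\pm1)3^m}{m!}h^m .
\]
By Lemma~\ref{lemma:p3:Tn'1}, the tail is $\equiv0\pmod{3^{w+2}}$. By \eqref{eq:T'n:+-1}, we have $T_n'(\pm1)=n^2$ and $T_n''(\pm1)=\pm n^2(n^2-1)/3$, and $\bnu_3(n^2-1)=w$. Substituting,
\[
T_n(\pm1+3h)\equiv \pm1+3h+(n^2-1)\bigl(3h\pm\tfrac{3}{2}n^2h^2\bigr)\pmod{3^{w+2}} .
\]
Every correction term here is divisible by $3^{w+1}$, so $T_n(x)\equiv x\pmod{3^{w+1}}$ for every $x\in A_1$. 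This gives $R_k=A_1$ for $k\le w+1$. Reducing the correction modulo $3^{w+2}$ and writing $n^2-1=3^wu$ with $3\nmid u$, it becomes $3^{w+1}u\,h(1\pm\tfrac12 h)$. For the sign $+$ this is $\equiv 3^{w+1}u\,h(1-h)$, so it vanishes modulo $3^{w+2}$ exactly when $h\equiv0$ or $1\pmod 3$. This produces the digit condition $j_1\in\{0,1\}$ for $k=w+2$.

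\emph{Inductive step.} Suppose that for some $k\ge w+2$ the self-loops modulo $3^k$ lying in $A_1$ are exactly those $x\equiv \pm1+\sum_{i=1}^{k-w-1}j_i3^i\pmod{3^{k-w}}$, with the $j_i$ as in \eqref{eq:j_i:condtion}. Fix such an $x_0=\pm1+\sum_{i<k-w}j_i3^i$, write the general lift as $x=x_0+3^{k-w}j+3^{k-w+1}y$, and compute $T_n(x)-x$ modulo $3^{k+1}$.

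We Taylor-expand about $x_0$. The linear term contributes $(T_n'(x_0)-1)3^{k-w}j$. The quadratic and higher terms vanish modulo $3^{k+1}$, since $2(k-w)\ge k+1$ once $k\ge 2w+1$; in the intermediate range we instead use the derivative bounds of Lemma~\ref{lemma:p3:Tn'1}, transported from $\pm1$ to $x_0$. For the linear coefficient we need $\bnu_3(T'_n(x_0)-1)=w$ exactly, together with its leading digit. Differentiating the base-case expansion gives $T_n'(\pm1+3h)\equiv 1+3^wu(1\pm h)\pmod{3^{w+1}}$, whose leading digit is $u(1+j_1)$ with $1+j_1\not\equiv0\pmod 3$ because $j_1\in\{0,1\}$. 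Hence the condition $T_n(x)\equiv x\pmod{3^{k+1}}$ reads
\[
q_{k-w}+u(1+j_1)j\equiv0\pmod 3,
\]
where $q_{k-w}=(T_n(x_0)-x_0)/3^{k}$. This congruence has a unique solution $j$, which is the formula for $j_{k-w}$ in \eqref{eq:j_i:condtion}. The factor $u$ must be absorbed into the normalization of $q_i$; we will check that the stated $q_i$, whose numerator contains $T_n(\cdot)\mp1+\sum$, agrees up to this unit and sign convention. Since $y$ is free, each residue class modulo $3^{k-w}$ of $R_k$ lifts to exactly one class modulo $3^{k-w+1}$, which completes the induction.

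\emph{Main obstacle.} The delicate step is controlling the higher Taylor terms at a general $x_0$ rather than at $\pm1$, uniformly for $k$ just above $w+1$. Our first approach will be to re-expand $T_n^{(m)}(x_0)/m!$ about $\pm1$ and invoke Lemma~\ref{lemma:p3:Tn'1} for $m\ge3$. The $m=2$ term then carries the factor $(n^2-1)$, which forces valuation at least $w+2(k-w)\ge k+1$.
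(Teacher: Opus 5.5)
Your plan is the paper's proof in all essentials. The base case expands $T_n$ about $\pm1$ and discards the $m\ge3$ terms by Lemma~\ref{lemma:p3:Tn'1}. The inductive step lifts one ternary digit by expanding about $x_0=\pm1+\Delta$, with $T_n^{(m)}(x_0)3^m/m!$ re-expanded about $\pm1$ exactly as in \eqref{eq:pm1+3h:m}.

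The genuine gap is the step you defer, and it cannot be closed as you intend. Your lifting condition $q_{k-w}+u(1+j_1)j\equiv0\pmod 3$ is correct. It gives $j_{k-w}\equiv -q_{k-w}\,u^{-1}(1+j_1)^{-1}\pmod 3$ with $u=(n^2-1)/3^w$. But $q_i$ is an intrinsic quantity, namely $(T_n(x_0)-x_0)/3^{w+i}$; the printed numerator $T_n(\cdot)\mp1+\sum$ is a sign typo. So there is no normalization left that could absorb $u$, and the stated rule $j_i=-q_i(1+j_1)^{-1}$ is correct only when $u\equiv1\pmod 3$.

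The paper's proof hides this by writing $3^{w+s}(j_s+j_sj_1)$, i.e.\ by tacitly setting $u\equiv1$. That holds for $n=19$ ($u=40$) but fails for $n=13$. There $w=1$ and $u=56\equiv2$, and $T_{13}(4)\equiv58\pmod{81}$ gives $q_2=2$.
\begin{itemize}
\item The stated rule yields $j_2=2$, i.e.\ $x\equiv22\pmod{27}$, yet $T_{13}(22)\equiv49\pmod{81}$.
\item Your rule yields $j_2=1$, and indeed $T_{13}(13)\equiv13$ and $T_{13}(40)\equiv40\pmod{81}$.
\end{itemize}
So your argument, carried out honestly, proves the statement with $u^{-1}$ inserted. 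You should state that corrected rule rather than promise an agreement with the printed $q_i$ that is false. (A minor slip elsewhere: $T_n''(\pm1)/2=\pm n^2(n^2-1)/6$ has valuation $w-1$. The quadratic Taylor term therefore has valuation at least $2k-w-1$, not $w+2(k-w)$. This is still $\ge k+1$ for $k\ge w+2$, so nothing breaks.)

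The $-1$ branch needs the same care. Your symmetry argument via $T_n(-x)=-T_n(x)$ is right, but it sends $1+\sum j_i3^i$ to $-1+\sum(-j_i)3^i$, not to $-1+\sum j_i3^i$. Your own base formula gives $h(1-\tfrac12h)\equiv h(1+h)$ for the minus sign, so there $j_1\in\{0,2\}$, not $\{0,1\}$. For $n=19$, the point $2=-1+3\cdot1$ has $T_{19}(2)\equiv56\not\equiv2\pmod{81}$, so it is not a self-loop modulo $3^{w+2}$. The set $R_k$ should therefore be written as $\pm\bigl(1+\sum j_i3^i\bigr)$, with the $j_i$ computed on the $+1$ branch using the $u^{-1}$-corrected rule.
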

\begin{proof}
Since $x\equiv\pm 1\pmod 3$ and $x\in \Zy{k}$, one can set $x = (\pm1 + 3h) \bmod{3^k}$, where $h = \sum^{k-1}_{i=1}j_i3^{i-1}$, $j_i\in\{0, 1, 2\}$ for $1\leq i\leq k-1$.
Applying Taylor's formula, one obtains
\begin{equation}\label{eq:pm1+3t:taylor}
T_n(\pm 1+3h)=\pm 1 + 3hT'_n(\pm1) + \cdots + \frac{(3h)^n}{n!}T^{(n)}_{n}(\pm1).
\end{equation}
From Lemma~\ref{lemma:p3:Tn'1}, one has 
$\frac{(3h)^m}{m!}\cdot T_n^{(m)}(\pm1)\equiv0\pmod{3^{w+2}}$ for any $m\geq3$.
It yields from \eqref{eq:pm1+3t:taylor} that
$T_n(\pm 1+3h) \equiv \pm 1+3hT'_n(\pm1) + \frac{h^23^2T''_n(\pm1)}{2}\pmod{3^{w+2}}$.
Then, according to \eqref{eq:T'n:+-1}, one has 
\begin{equation}\label{eq:pm1+3t:taylor:futher}
T_n(\pm 1+3h) \equiv \pm1+3h+3h(n^2-1)(1+2^{-1}an^2)\pmod{3^{w+2}}.
\end{equation}
Note that $t\equiv j_1 \pmod3$ and $(1+2^{-1}hn^2)\equiv (1+2j_1) \pmod3$. 
Substituting the two congruences and $\bnu_3(n^2-1)=w$ into \eqref{eq:pm1+3t:taylor:futher}, one gets
\begin{equation*}
T_n(\pm1+3h)\equiv
\begin{cases}
\pm1+3h+3^{w+1}\cdot j_1(1+2j_1)\pmod{3^{w+2}}; \\ \pm1+3h\pmod{3^{w+1}}.
\end{cases}
\end{equation*}
Thus, $R_k=A_1$ when $k\leq w+1$, and
$R_{w+2}=\{x\in\mathbb{Z}_{3^k} \vvert x \equiv\pm1+3j_1\pmod9\}$, where $j_1\in\{0, 1\}$.

Suppose the proposition holds for $k=w+s$ with $s\geq2$, one has 
\(
T_n(\pm 1 + \sum_{i=1}^{s-1} j_i 3^i) \equiv \pm 1 + \sum_{i=1}^{s-1}j_i3^i\pmod{3^{w+s}},
\)
where $j_i$ satisfies \eqref{eq:j_i:condtion}.
Denote $\Delta=\sum^{s-1}_{i=1}3^ij_i$, one has 
\begin{equation}\label{eq:pm1:delta:q}
T_n(\pm 1+\Delta) =\pm 1+\Delta+q_{s}\cdot3^{w+s}.  
\end{equation}
When $k=w+s+1$, applying Taylor’s formula obtains
\begin{multline}\label{eq:pm1+delta+B}
T_n(\pm1+\Delta+3^sB_s)=\\T_n(\pm1+\Delta)+\sum^n_{m=1}\frac{(3^sB_s)^m}{m!}T_n^{(m)}(\pm1+\Delta),
\end{multline}
where $B_s=\sum^k_{i=s}j_i3^{i-s}$.
Note that
$\frac{T_n^{(m)}(\pm1+\Delta)\cdot 3^m}{m!}=\frac{T_n^{(m)}(\pm1)\cdot 3^m}{m!}+\sum_{i=m+1}^n\frac{\Delta^{i-m}\cdot T_n^{(i)}(\pm1)\cdot3^m}{(i-m)!\cdot m!}
=\frac{T_n^{(m)}(\pm1)\cdot3^m}{m!}+\sum_{i=m+1}^n(\frac{\Delta}{3})^{i-m}\binom{i}{m} \frac{T_n^{(i)}(\pm1)\cdot3^i}{i!}$.
It yields from Lemma \ref{lemma:p3:Tn'1} that
\begin{multline}\label{eq:pm1+3h:m}
\frac{T^{(m)}_n(\pm 1+\Delta) \cdot 3^m}{m!}\equiv\\
\begin{cases}
0\pmod{3^{w+2}}  &\mbox{if~} m\ge 3;\\
\frac{T^{''}_n(\pm 1) \cdot 3^2}{2}\pmod{3^{w+2}}  &\mbox{if~} m=2;\\
3T'_n(\pm 1)+3\Delta T''_n(\pm1)  \pmod{3^{w+2}}  &\mbox{if~} m=1.
\end{cases}
\end{multline}
So, $\frac{(3^sB_s)^m}{m!}T^{(m)}_n(\pm1+\Delta)\equiv0\pmod{3^{w+s+1}}$ for any $m\geq3$, $\frac{(3^sB_s)^2T''_n(\pm1+\Delta)}{2}\equiv \frac{(3^sB_s)^2\cdot T''_n(\pm1)}{2}\pmod{3^{w+s+1}}$, and
$3^sB_sT'_n(\pm1+\Delta)\equiv 3^sT'_n(\pm1)+3^s\Delta T''_n(\pm1)\pmod{3^{w+s+1}}$.
Substituting the previous three congruences into \eqref{eq:pm1+delta+B}, one gets 
\(
T_n(\pm1+\Delta+3^sB_s)\equiv T_n(\pm1+\Delta)+3^sB_s(T'_n(\pm1)+\Delta T''_n(\pm1))+\frac{(3^sB_s)^2\cdot T''_n(\pm1)}{2}\pmod{3^{w+s+1}}.
\)
It yields from \eqref{eq:T'n:+-1} that 
$T_n(\pm1+\Delta+3^sB_s)\equiv
T_n(\pm1+\Delta)+3^sB_sn^2+3^{s-1}\Delta B_sn^2(n^2-1)+
3^{2s-1}B_s^2\frac{n(n^2-1)}{2}  
\pmod{3^{w+s+1}}$.
Then, combining $\bnu_3(n^2-1)=w$, one further gets
\(
T_n(\pm1+\Delta+3^sB_s)\equiv T_n(\pm1+\Delta)+3^sB_s+
3^{w+s}(j_s+j_sj_1)\pmod{3^{w+s+1}}.
\)
Combining \eqref{eq:pm1:delta:q}, one can get 
\begin{multline*}
T_n(\pm1+\Delta+3^sB_s)\equiv \pm1+\Delta+3^sB_s+\\3^{w+s} (q_s+j_s+j_sj_1)\pmod{3^{w+s+1}}.
\end{multline*}
So, when $(q_s+j_s+j_sj_1)\equiv 0\pmod 3$, namely $j_s =-q_s(1+j_1)^{-1}\bmod 3$, one has 
\begin{multline*}
T_n(\pm1+\Delta+3^sB_s)\equiv \pm1+\Delta+3^sB_s\pmod{3^{w+s+1}}
\end{multline*}
for any  $j_i\in \{0,1,2\}$ when $i\ge s+1$.
So, when 
$x\equiv (\pm 1 + \sum_{i=1}^{s} j_i 3^i) \pmod{3^{s+1}}$, one has $T_n(x)\equiv x\pmod{3^{w+s+1}}$,
which means the proposition holds for $k=w+s+1$.
\end{proof}

\begin{proposition}\label{pro:SMN:p3:0}
The connected components of $\mathcal{G}_{3^k}$ to which all states $x$ satisfying $x\equiv0\pmod 3$ belong, are composed of
$\frac{2\cdot 3^{w-1}}{l_0}$ cycles of length $l_03^i$, $\frac{3^w-1}{l_0}$ cycles of length $l_0$, and one self-loop when $k\ge w+2$;
$\frac{3^{k-1}-1}{l_0}$ cycles of length $l_0$ and one self-loops otherwise, where $l_0=\ord(T'_n(0))$, $i\in\{1, 2, \cdots, k-w-1\}$.
\end{proposition}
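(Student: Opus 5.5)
The plan is to compute, for every $x\equiv 0\pmod 3$, the exact least period of $S_{3^k}(x;n)$, and then count cycles by grouping states according to $a=\bnu_3(x)$. Since $T_n$ is a permutation polynomial, every connected component of $\mathcal{G}_{3^k}$ is a single cycle. Because $T_n$ is an odd polynomial, $T_n(x)\equiv 0\pmod 3$ whenever $x\equiv0\pmod 3$, so the class of $0$ is invariant. Also $T_n(0)=0$ gives the single self-loop. Put $\Phi=T_{n^{l_0}}$ with $l_0=\ord(T'_n(0))\in\{1,2\}$.

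\textbf{Step 1 (valuation of $\Phi(x)-x$).} Write $x=3^a u$ with $3\nmid u$ and $a\ge 1$. Since $\Phi$ is odd and $\Phi(0)=0$, Taylor's formula at $0$ gives
\[
\Phi(x)-x=(\Phi'(0)-1)x+\sum_{m\ge 3,\ m \text{ odd}}\frac{\Phi^{(m)}(0)3^m}{m!}\Bigl(\frac{x}{3}\Bigr)^m .
\]
I will check that $\bnu_3(\Phi'(0)-1)=w$ exactly. If $l_0=1$ then $\Phi'(0)=n$ with $n\equiv 1\pmod{3^w}$, and $\bnu_3(n-1)=w$ by the definition of $w$. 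If $l_0=2$ then $\Phi'(0)=n^2$, and $\bnu_3(n^2-1)=w$. Hence the linear term has valuation exactly $w+a$. By Lemma~\ref{lemma:dev:m:0mod3:w+2}, each higher term has valuation at least $w+2+m(a-1)\ge w+a+1$ for $m\ge3$. Therefore $v_x=\bnu_3(\Phi(x)-x)=w+\bnu_3(x)$.

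\textbf{Step 2 (the factor $l_0$ is exact).} When $l_0=2$ I must show $T_n(x)\ne x$ for $x\neq 0$, so that the $T_n$-period is not $1$. The first-order Taylor term gives $T_n(x)\equiv T'_n(0)x\equiv -x\pmod{3^{a+1}}$. Since $-x\not\equiv x\pmod{3^{a+1}}$, $x$ is not fixed. Combining this with Theorem~\ref{Theorem:periodValue} and Step 1, the least period of a nonzero $x$ with $a=\bnu_3(x)$ is
\[
\begin{cases}
l_0\cdot 3^{k-w-a} & \text{if } a\le k-w-1,\\
l_0 & \text{if } k-w\le a\le k-1 .
\end{cases}
\]
In the second range, Step 1 already gives $\Phi(x)\equiv x\pmod{3^{k}}$. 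The main point to watch is that Theorem~\ref{Theorem:periodValue} is applied only when $k\ge v_x$; in the remaining range the congruence from Step 1 suffices directly.

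\textbf{Step 3 (counting).} In $\Zy{k}$ there are $2\cdot 3^{k-a-1}$ elements with $\bnu_3(x)=a$.
\begin{itemize}
\item For $k\ge w+2$ and $1\le a\le k-w-1$, these elements form $2\cdot 3^{k-a-1}/(l_0 3^{k-w-a})=2\cdot 3^{w-1}/l_0$ cycles of length $l_03^{i}$, where $i=k-w-a$ runs over $\{1,\dots,k-w-1\}$.
\item The nonzero elements with $a\ge k-w$ number $3^{w}-1$, giving $(3^w-1)/l_0$ cycles of length $l_0$.
\item For $k\le w+1$, all $3^{k-1}-1$ nonzero states have period $l_0$.
\end{itemize}
As a sanity check, the totals add up: $1+(3^w-1)+(3^{k-1}-3^w)=3^{k-1}$. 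I expect the only delicate step to be the exactness $\bnu_3(\Phi'(0)-1)=w$ in Step 1, i.e., ensuring the linear term strictly dominates. This relies on the case analysis already carried out in the proof of Lemma~\ref{lemma:dev:m:0mod3:w+2}.
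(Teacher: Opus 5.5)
Your proposal is correct and follows essentially the same route as the paper: $0$ is the self-loop, Theorem~\ref{Theorem:periodValue} is applied with $v_x=w+\bnu_3(x)$ (the paper's $v_s=w+1+\bnu_3(t)$ for $x=3t$) when the valuation is small, the approximation $T_n(x)\equiv T'_n(0)\,x$ handles large valuations and the case $k\le w+1$, and cycles are then counted by $3$-adic valuation. Your Step~1 also proves the exactness $\bnu_3(\Phi'(0)-1)=w$, which the paper uses without proof, and your Step~2 shows why the factor $l_0$ survives in the range where Theorem~\ref{Theorem:periodValue} is not invoked.
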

\begin{proof}
When $x=0$, one has $T_n(0)=0$, which means the node with status value $0$ is a self-loop in $\mathcal{G}_{3^k}$.
When $x\equiv0\pmod3$ and $x\neq0$, one can write $x=3t$, where $t$ is a positive integer.
By Taylor's formula, one has 
\begin{equation}\label{eq:taylor:mod3:0}
T_n(0+3t)=T_n(0)+3tT'_n(0)+\sum_{m=2}^n\frac{(3t)^m\cdot T^{(m)}_n(0)}{m!}.
\end{equation}
According to the value of $k$, the proof can be divided into the following two cases:

\begin{itemize}
\item $k\leq w+1$:
From Lemma~\ref{lemma:dev:m:0mod3:w+2} and \eqref{eq:taylor:mod3:0}, one obtains
$T_n(3t)\equiv3t\cdot T'_n(0)\pmod{3^k}$.
It means $T_n(x)\equiv x\cdot T'_n(0)\pmod{3^k}$.
Hence, $T^i_n(x)=T_n(T^{i-1}_n(x))\equiv x\cdot [T'_n(0)]^i\pmod{3^k}$.
From Lemma~\ref{lemma:dev:m:0mod3:w+2}, one has $T'_{n^{l_0}}(0)=[T'_n(0)]^{l_0}\equiv1\pmod3^k$, which further gets $T^{l_0}_n(x)\equiv x\pmod{3^k}$.
Therefor, from the definition of $l_0$, one can get $l_0$ is the least integer such $T^{l_0}_n(x)\equiv x\pmod{3^k}$. Hence, there are $\frac{3^{k-1}-1}{l_0}$ cycles of length $l_0$ in $\mathcal{G}_{3^k}$.
\item $k\geq w+2$:
If $\bnu_3(t)\geq k-w-1$, i.e. $k\leq\bnu_3(t)+w+1$, it yields from  \eqref{eq:taylor:mod3:0} that $T_n(3t)\equiv3t\cdot T'_n(0)\pmod{3^k}$. Then from the above case, one obtains there are  $\frac{3^w-1}{l_0}$ cycles of length $l_0$ in $\mathcal{G}_{3^k}$. 
If $\bnu_3(t)\leq k-w-2$, one has $v_s=w+1+\bnu_3(t)$ in Theorem~\ref{Theorem:periodValue}. Then, the least period of the sequence is $l_03^{k-w-1-\bnu_3(t)}$. Let $s = \bnu_3(t)$, one obtains the number of cycles of length $l_03^i$ is
\begin{equation*}
\frac{|\{t \in \mathbb{Z}_{3^{k-1}} \mid \nu_3(t)=s\}|}{l_03^{k-w-1-s}} = \frac{2 \cdot 3^{k-2-s}}{l_03^{k-w-1-s}} = \frac{2 \cdot 3^{w-1}}{l_0},
\end{equation*}
where $i=k-w-1-s \in \{1, 2, \ldots, k-w-1\}$.
\end{itemize}
\end{proof}

\begin{proposition}\label{pro:SMN:p3:pm1}
The connected components of $\mathcal{G}_{3^k}$ to which all states $x$ satisfying $x\equiv\pm1\pmod 3$ belong, are composed of $8\cdot 3^{w-1}$ cycles of length $3^i$, $4\cdot 3^w$ self-loops, and $2\cdot 3^{w-1}$ cycles of length $3^{k-w-1}$ when $k\ge w+2$; and $2\cdot 3^{k-1}$ self-loops otherwise, where $i\in\{1, 2, \cdots, k-w-2\}$.
\end{proposition}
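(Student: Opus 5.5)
The plan is to reduce the statement to counting states by their $3$-adic ``fixed-point depth'' and then read off cycle lengths from Theorem~\ref{Theorem:periodValue}. For $x\equiv\pm1\pmod 3$, that theorem gives $l_x=\ord(T'_n(x))=1$, since $T'_n(x)\equiv n^2\equiv1\pmod 3$. So the least period of $S_{3^k}(x;n)$ is $3^{k-v_x}$, where $v_x=\bnu_3(T_n(x)-x)$ is computed on an integer lift and capped at $k$ (so $v_x=k$ means a self-loop). Because $T_n$ is a permutation, the states with $v_x=v$ split into exactly $N_v/3^{k-v}$ cycles of length $3^{k-v}$, where $N_v$ is the number of such states in $\mathbb{Z}_{3^k}$. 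The whole proof therefore comes down to computing $N_v$.

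\emph{Step 1 (lower bound $v_x\ge w+1$).} The first part of the proof of Proposition~\ref{lemma:Rk:self-loop} gives $T_n(\pm1+3h)\equiv\pm1+3h\pmod{3^{w+1}}$. Hence every state in the class has $v_x\ge w+1$. If $k\le w+1$, all $2\cdot3^{k-1}$ states are self-loops, which settles the second case of the statement.

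\emph{Step 2 (counting $\{v_x\ge m\}$ for $w+2\le m\le k$).} By Proposition~\ref{lemma:Rk:self-loop} applied at precision $m$, we have $T_n(x)\equiv x\pmod{3^m}$ exactly when $x\equiv\pm1+\sum_{i=1}^{m-w-1}j_i3^i\pmod{3^{m-w}}$. Here $j_1\in\{0,1\}$, and each later digit is forced by $j_i\equiv-q_i(1+j_1)^{-1}\pmod 3$. This step needs the description in Proposition~\ref{lemma:Rk:self-loop} to be an ``if and only if''. I would extract this from its inductive step: the new error term is $3^{w+s}(q_s+j_s(1+j_1))$, and since $1+j_1\in\{1,2\}$ is a unit mod $3$, exactly one value of $j_s$ kills it. This gives $2\cdot2=4$ admissible residue classes modulo $3^{m-w}$, so
\[
\#\{x\in\mathbb{Z}_{3^k}: x\equiv\pm1 \ (\mathrm{mod}\ 3),\ v_x\ge m\}=4\cdot3^{k-m+w}.
\]
A further point to check is that $T_n(x)\bmod 3^m$ depends only on $x\bmod 3^m$. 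This holds because $T_n$ has integer coefficients, so the count is well defined.

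\emph{Step 3 (differencing).} First, $N_{w+1}=2\cdot3^{k-1}-4\cdot3^{k-2}=2\cdot3^{k-2}$, which yields $2\cdot3^{k-2}/3^{k-w-1}=2\cdot3^{w-1}$ cycles of length $3^{k-w-1}$. Next, for $w+2\le v\le k-1$, $N_v=4\cdot3^{k-v+w}-4\cdot3^{k-v+w-1}=8\cdot3^{k-v+w-1}$, which yields $8\cdot3^{w-1}$ cycles of length $3^{i}$ with $i=k-v\in\{1,\dots,k-w-2\}$. Finally, $N_k=4\cdot3^{w}$ self-loops. As a consistency check, the state counts sum to $2\cdot3^{k-1}$.

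The main obstacle is Step~2, specifically justifying that Proposition~\ref{lemma:Rk:self-loop} characterizes the self-loop set exactly and not just a subset. The boundary case $m=w+2$, where only the $j_1(1+2j_1)$ condition applies, must be checked separately from the inductive range.
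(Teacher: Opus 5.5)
Your proposal is correct, and it is essentially the argument the paper intends when it omits this proof as analogous to Proposition~\ref{pro:SMN:p3:0}: least period $3^{k-v_x}$ from Theorem~\ref{Theorem:periodValue} with $l_x=1$, layer sizes $4\cdot 3^{k-m+w}$ from Proposition~\ref{lemma:Rk:self-loop} read as an exact characterization, and then differencing. The one correction is inherited from Proposition~\ref{lemma:Rk:self-loop}: with $u=(n^2-1)/3^w$ one actually has $T_n(\pm1+3h)\equiv\pm1+3h+3^{w+1}u\,h(1\pm2^{-1}h)\pmod{3^{w+2}}$, so on the $-1$ branch the admissible first digits are $j_1\in\{0,2\}$ (e.g.\ $T_{19}(2)\equiv56\not\equiv2\pmod{81}$), and the coefficient of $j_s$ in the inductive step is the unit $u(1\pm j_1)$ rather than $1+j_1$; this changes which residue classes occur but not your count of two choices of $j_1$ per sign with every later digit forced, so your conclusion stands.
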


As a typical example with $n=19$, it is determined that $w=2$ and $l_0=2$. According to Propositions~\ref{pro:SMN:p3:0} and \ref{pro:SMN:p3:pm1}, all states $x\equiv\pm1\pmod3$ form the graph $\mathcal{G}(T_n/\mathbb{Z}_{3^5}) = 24 \times cyc(3, \bullet) \bigoplus 36 \times cyc(1, \bullet) \bigoplus 6 \times cyc(9, \bullet)$, whereas all states $x\equiv0\pmod3$ constitute the graph $\mathcal{G}(T_n/\mathbb{Z}_{3^5}) = 1 \times cyc(1, \bullet) \bigoplus 4 \times cyc(2, \bullet) \bigoplus_{i=1}^{2} 3 \times cyc(2 \cdot 3^i, \bullet)$. The functional graph structures illustrated in Fig.~\ref{fig:cheby:Z3p:n19:k2345} a) and b) demonstrate complete consistency with Propositions \ref{pro:SMN:p3:pm1} and \ref{pro:SMN:p3:0}, respectively.

\section{Conclusion}

This paper analyzed the functional graph induced by Chebyshev permutation polynomials over the composite ring $\mathbb{Z}_{2^{k_1}3^{k_2}}$. By leveraging new structural properties of Chebyshev polynomials modulo powers of $2$ and $3$, we established explicit characterizations of both the path lengths and the cycle structures of the resulting dynamical systems. Despite the domain's mixed binary-ternary nature, our results reveal that the functional graphs exhibit strong and predictable regularities. In particular, the number of fixed-length cycles remains constant under parameter scaling, and the associated branching patterns evolve in a controlled and analyzable manner as $k_1$ and $k_2$ increase.
These findings extend existing studies on Chebyshev dynamics over prime-power rings to the broader setting of composite rings. Beyond their intrinsic number-theoretic interest, the results also shed light on how complexity emerges in digital nonlinear maps, thereby providing helpful theoretical support for evaluating the pseudo-randomness and security properties of Chebyshev-based constructions in cryptographic applications.

\bibliographystyle{IEEEtran_doi}
\bibliography{Graph_Chebyshev}

@inproceedings{Umeno:Permute,
  author =       {Ken Umeno},
  title =        {Key exchange by {C}hebyshev polynomials modulo $2^w$},
  booktitle =    {Proceedings of Indonesia Cryptology and Information Security},
  year =         {2005},
  pages =        {95--97},
}

@article{Liao:ITC:2010,
  author =       {Xiaofeng Liao and Fei Chen and Kwok-Wo Wong},
  title =        {On the security of public-key algorithms based on {C}hebyshev polynomials over the finite field $\mathbb{Z}_N$},
  journal =      {IEEE Transactions on Computers},
  year =         {2010},
  volume =       {59},
  number =       {10},
  pages =        {1392--1401},
  doi =          {10.1109/TC.2010.148},
}

@article{Chen:chebyZN:IS2011,
  author =       {Fei Chen and Xiaofeng Liao and Tao Xiang and Hongying Zheng},
  title =        {Security analysis of the public key algorithm based on {C}hebyshev polynomials over the integer ring $\mathbb{Z}_N$},
  journal =      {Information Sciences},
  year =         {2011},
  volume =       {181},
  number =       {22},
  pages =        {5110-5118},
  doi =          {10.1016/j.ins.2011.07.008},
}

@article{bulychev:Chebyshev:ARC2003,
  author =       {Bulychev, Yu G and Bulycheva, E Yu},
  title =        {Some new properties of the {C}hebyshev polynomials and their use in analysis and design of dynamic systems},
  journal =      {Automation and Remote Control},
  year =         {2003},
  volume =       {64},
  number =       {4},
  pages =        {554--563},
  doi =          {10.1023/A:1023286213571},
}

@article{Bergamo:PKIchebyshev:CASI2005,
  author =       {P Bergamo and P D'Arco and A De Santis and L Kocarev},
  title =        {Security of public-key cryptosystems based on {C}hebyshev polynomials},
  journal =      {IEEE Transactions on Circuits and Systems I: Regular Papers},
  year =         {2005},
  volume =       {52},
  number =       {7},
  pages =        {1382--1393},
  doi =          {10.1109/TCSI.2005.851701},
}

@article{Yoshioka:ChebyshevPk:TCAS2:2018,
  author =       {Yoshioka, Daisaburo},
  title =        {Properties of {C}hebyshev polynomials modulo $p^k$},
  journal =      {IEEE Transactions on Circuits and Systems II: Express Briefs},
  year =         {2018},
  volume =       {65},
  number =       {3},
  pages =        {386--390},
  doi =          {10.1109/TCSII.2017.2739190},
}

@article{cqli:network:TCASI2019,
  author =       {Chengqing Li and Bingbing Feng and Shujun Li and Ju\"ergen Kurths and Guanrong Chen},
  title =        {Dynamic analysis of digital chaotic maps via state-mapping networks},
  journal =      {IEEE Transactions on Circuits and Systems I: Regular Papers},
  year =         {2019},
  volume =       {66},
  number =       {6},
  pages =        {2322--2335},
  doi =          {10.1109/TCSI.2018.2888688},
}

@ARTICLE{Yoshioka:ChebyshevPk:TCAS2:2020,
  author =       {D. {Yoshioka}},
  title =        {Security of public-key cryptosystems based on {C}hebyshev polynomials over $\mathbb{Z}/p^{k}\mathbb{Z}$},
  journal =      {IEEE Transactions on Circuits and Systems II: Express Briefs},
  year =         {2020},
  volume =       {67},
  number =       {10},
  pages =        {2204-2208},
  doi =          {10.1109/TCSII.2019.2954855},
}

@article{Yoshioka:chebypro:Nlinear2015,
  author =       {Daisaburo Yoshioka and Yuta Dainobu},
  title =        {On some properties of {C}hebyshev polynomial sequences modulo $2^k$},
  journal =      {Nonlinear Theory and Its Applications},
  year =         {2015},
  volume =       {6},
  number =       {3},
  pages =        {443-452},
  doi =          {10.1587/nolta.6.443},
}

@article{chenf:cat2:TIT13,
  author =       {Chen, Fei and Wong, Kwok-Wo and Liao, Xiaofeng and Xiang, Tao},
  title =        {Period distribution of the generalized discrete {A}rnold {C}at map for $n=2^e$},
  journal =      {IEEE Transactions on Information Theory},
  year =         {2013},
  volume =       {59},
  number =       {5},
  pages =        {3249-3255},
  doi =          {10.1109/TIT.2012.2235907},
}

@article{licq:Logistic:IJBC2023,
  author =       {Xiaoxiong Lu and Eric Yong Xie and Chengqing Li},
  title =        {Periodicity Analysis of {L}ogistic Map over Ring $\mathbb{Z}_{3^n}$},
  journal =      {International Journal of Bifurcation and Chaos},
  year =         {2023},
  volume =       {33},
  number =       {5},
  pages =        {art. no. 2350063},
  doi =         {10.1142/S0218127423500633},
}

@article{cqli:Cheby:TIT25, 
  author =       {Chengqing Li and Xiaoxiong Lu and Kai Tan and Guanrong Chen},
  title =        {The Graph Structure of {C}hebyshev Permutation Polynomials over Ring $\mathbb{Z}_{p^k}$}, 
  journal =      {IEEE Transactions on Information Theory},
  year =         {2025},
  volume =       {71},
  number =       {2},
  pages =        {1419--1433},
  doi =          {10.1109/TIT.2024.3522095},
}

@article{Daniel:Chebyshev:2018,
  author =       {Claudio Qureshi and Daniel Panario},
  title =        {The graph structure of {C}hebyshev polynomials over finite fields and applications},
  journal =      {Designs, Codes and Cryptography},
  year =         {2019},
  volume =       {87},
  pages =        {393--416},
  doi =          {10.1007/s10623-018-0545-7},
}

@article{TXJ:imageEncryption:2014,
  author =       {Tong, Xiao-Jun and Zhang, Miao and Wang, Zhu and Liu, Yang},
  title =        {A image encryption scheme based on dynamical perturbation and linear feedback shift register},
  journal =      {Nonlinear Dynamics},
  year =         {2014},
  volume =       {78},
  number =       {3},
  pages =        {2277-2291},
  doi =          {10.1007/s11071-014-1564-1},
}

@article{Tan:random:ND2024,
  author =       {Tan, Songyuan and Sun, Jingru and Tang, Yiping and Sun, Yichuang and Wang, Chunhua},
  title =        {Hyperchaotic bilateral random low-rank approximation random sequence generation method and its application on compressive ghost imaging},
  journal =      {Nonlinear Dynamics},
  year =         {2024},
  volume =       {112},
  number =       {7},
  pages =        {5037-5052},
  doi =          {10.1007/s11071-024-09317-0},
}

@article{Yoshioka:TCSII:2016,
  author =       {Yoshioka, Daisaburo and Kawano, Kento},
  title =        {Periodic Properties of Chebyshev Polynomial Sequences Over the Residue Ring $\mathbb{Z}/2^{k}\mathbb{Z}$},
  journal =      {IEEE Transactions on Circuits and Systems II: Express Briefs},
  year =         {2016},
  volume =       {63},
  number =       {8},
  pages =        {778-782},
  doi =          {10.1109/TCSII.2016.2531058},
}

@inproceedings{Yoshioka:isit:2023,
  author =       {Yoshioka, Daisaburo},
  title =        {Periodic Properties of Commutative Polynomials Defined by Fourth Order Recurrence Relations with Two Variables Over $\mathbb{Z}_{2^k}$},
  booktitle =    {IEEE International Symposium on Information Theory},
  year =         {2023},
  pages =        {1425-1429},
  doi =          {10.1109/ISIT54713.2023.10206860},
}

@book{knuth1997vol2,
  title     = {The Art of Computer Programming, Volume 2: Seminumerical Algorithms},
  author    = {Knuth, Donald E.},
  year      = {1997},
  edition   = {3rd},
  publisher = {Addison-Wesley Professional},
  address   = {Reading, Massachusetts},
  isbn      = {0-201-89684-2}
}

@INPROCEEDINGS{kocarev:2003:public,
  author =       {Kocarev, L. and Tasev, Z.},
  title =        {Public-key encryption based on Chebyshev maps},
  booktitle =    {Proceedings of the 2003 International Symposium on Circuits and Systems, 2003. ISCAS '03.},
  year =         {2003},
  volume =       {3},
  pages =        {III-III},
  doi =          {10.1109/ISCAS.2003.1204947},
}

@article{wang:perturbation:TCAS2016,
  author =       {Qianxue Wang and Simin Yu and Chengqing Li and Jinhu L{\"{u}} and Xiaole Fang and Christophe Guyeux and Jacques M. Bahi},
  title =        {Theoretical Design and {FPGA}-Based Implementation of Higher-Dimensional Digital Chaotic Systems},
  journal =      {IEEE Transactions on Circuits and Systems I: Regular Papers},
  year =         {2016},
  volume =       {63},
  number =       {3},
  pages =        {401--412},
  doi =          {10.1109/TCSI.2016.2515398}
}

@misc{panraksa:fixedChebysev:arxiv2026,
  author =       {Chatchawan Panraksa and Aram Tangboonduangjit},
  title =        {Fixed-point lifting and ghost periodic points for Chebyshev polynomials modulo odd prime powers},
  eprint =       {2605.04417},
  howpublished = {},
  year =         {2026},
  archivePrefix= {arXiv},  
  doi =          {10.48550/arXiv.2605.04417},
}

\end{document}